\newtheorem{conjecture}[theorem]{Conjecture}
\newcommand{\threesat}{\textsf{3Sat}\xspace}
\newcommand{\alin}[1]{\textsf{#1NLin}\xspace}
\newcommand{\notall}{\textsf{4-Not-All-There}\xspace}
\newcommand{\fournat}{\textsf{4NAT}\xspace}
\newcommand{\csp}{\mathsf{CSP}}
\newcommand{\instance}{\mathcal{I}}
\newcommand{\buddy}{\textsf{TwoPair}\xspace}
\newcommand{\twopair}{\buddy}
\newcommand{\gadg}{\gamma}
\newcommand{\roots}{U}
\newcommand{\x}{\bx}
\newcommand{\y}{\by}
\newcommand{\z}{\bz}
\newcommand{\w}{\bw}
\newcommand{\pih}{\pi_3}
\begin{document}

\title{A new point of NP-hardness for 2-to-1 Label Cover}

\author{Per Austrin\thanks{Department of Computer Science, University of Toronto.  Funded by NSERC.} \and Ryan O'Donnell\thanks{Department of Computer Science, Carnegie Mellon University. Supported by NSF grants CCF-0747250 and CCF-0915893, and by a Sloan fellowship.} \and John Wright\thanks{Department of Computer Science, Carnegie Mellon University.}}

\maketitle

\begin{abstract}
    We show that given a satisfiable instance of the $2$-to-$1$ Label Cover problem, it is $\NP$-hard to find a $(\frac{23}{24} + \eps)$-satisfying assignment. 
\end{abstract}

\setcounter{page}{0}
\thispagestyle{empty}
\newpage
\section{Introduction}

Over the past decade, a significant amount of progress has been made in the field of hardness of approximation via results based on the conjectured hardness of certain forms of the Label Cover problem.  The \emph{Unique Games Conjecture} (UGC) of Khot~\cite{Kho02} states that it is $\NP$-hard to distinguish between nearly satisfiable and almost completely unsatisfiable instances of \emph{Unique}, or \emph{1-to-1}, Label Cover. Using the UGC as a starting point, we now have optimal inapproximability results for Vertex Cover~\cite{KR03}, Max-Cut~\cite{KKMO07}, and many other basic constraint satisfaction problems (CSP).  Indeed, assuming the UGC we have essentially optimal inapproximability results for \emph{all} CSPs~\cite{Rag08}.  In short, modulo the understanding of Unique Label Cover itself, we have an excellent understanding of the (in-)approximability of a wide range of problems.

Where the UGC's explanatory powers falter is in pinning down the approximability of \emph{satisfiable} CSPs.  This means the task of finding a good assignment to a CSP when guaranteed that the CSP is fully satisfiable.  For example, we know from the work of \Hastad~\cite{Has01} that given a fully satisfiable $\threesat$ instance, it is $\NP$-hard to satisfy $\frac78 + \eps$ of the clauses for any $\eps > 0$.  However given a fully satisfiable $1$-to-$1$ Label Cover instance, it is completely trivial to find a fully satisfying assignment.  Thus the UGC can not be used as the starting point for hardness results for satisfiable CSPs.  Because of this, Khot additionally posed his $d$-to-$1$ Conjectures:
\begin{conjecture}[\cite{Kho02}]
For every integer $d \geq 2$ and $\eps > 0$, there is a label set size~$q$ such that it is $\NP$-hard to $(1, \eps)$-decide the $d$-to-$1$ Label Cover problem.
\end{conjecture}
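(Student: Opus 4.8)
This is Khot's $d$-to-$1$ Conjecture, which is still open; what follows is the natural route one would attempt, patterned on all known perfect-completeness $\NP$-hardness results for Label Cover. Step~3 below is exactly the point that has resisted every attack --- and is also why this paper proves only a soundness-$\frac{23}{24}$ statement for $d=2$ rather than the full conjecture.

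\textbf{Step 1: a structured outer PCP.} Begin with the $\NP$-hardness of gap-$\threesat$ (H{\aa}stad, or just the PCP theorem), pass to the standard clause/variable bipartite Label Cover, and apply $t$-fold parallel repetition to obtain, for every $\delta > 0$, an instance $\instance_{\mathrm{out}}$ with label sets $[L]$ and $[R]$ and projection constraints $\pi_e \colon [R] \to [L]$ satisfying the gap ``value $1$ versus value $< \delta$''. I would additionally arrange that $\instance_{\mathrm{out}}$ is \emph{smooth} in Khot's sense (each $\pi_e$ identifies very few pairs of labels) and has projection fibres of controlled size, so that the fibres lying over each left label can later be grouped into $d$-tuples.

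\textbf{Step 2: a $d$-to-$1$-shaped inner verifier.} Compose $\instance_{\mathrm{out}}$ with a long-code gadget: replace each outer vertex by a block of (folded) Boolean or $\mathbb{Z}_d$-valued functions on its label set, and, for each outer edge $e$, connect the two corresponding blocks by a dictatorship-style consistency test. The delicate combinatorial requirement is to design this test so that the binary relation it induces between the two new label sets is \emph{exactly} $d$-to-$1$, while still being perfectly satisfied by honest dictator encodings of any satisfying labelling of $\instance_{\mathrm{out}}$. Here the smoothness and the grouping of the fibres of $\pi_e$ into $d$-tuples are used: an entry of the code on the $[R]$-side should match exactly $d$ entries on the $[L]$-side, drawn from a single fibre-group. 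For $d = 2$ this is the regime in which the $\frac{23}{24}$-bound of this paper lives (via a gadget reduction through the auxiliary $\csp$s it introduces); the conjecture asks to push the identical template down to soundness $\eps$.

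\textbf{Step 3 (the main obstacle): soundness with perfect completeness.} Suppose the composed instance has an assignment satisfying more than $\eps$ of its edges. One wants a decoding lemma: if the block assignments jointly pass the $d$-to-$1$ consistency test with probability $> \eps$, then some bounded set of ``influential'' coordinates can be sampled to recover a labelling of $\instance_{\mathrm{out}}$ of value $> \delta(\eps, d, q)$, contradicting Step~1. The analytic heart is an analogue of the Majority-Is-Stablest / Bourgain junta theorems for the Markov operator that governs the $d$-to-$1$ noise (projection onto the subspace of functions respecting the $d$-to-$1$ map), \emph{but with perfect completeness} --- so no independent noise can be injected on the completeness side, and one cannot simply smooth the test and invoke hypercontractivity. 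This is the crux, and the reason the conjecture remains open: the decoding must instead exploit structural rigidity of the $d$-to-$1$ predicate itself, the same phenomenon that makes perfect-completeness $\threesat$-hardness genuinely harder than $(1-\eps)$-versus-$\eps$ hardness. A full proof would require either a new dictatorship test whose perfect-completeness soundness analysis sidesteps hypercontractivity, or a structural theorem about near-satisfying assignments of $d$-to-$1$ constraint systems that can stand in for the invariance principle.
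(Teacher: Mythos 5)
This statement is Khot's $d$-to-$1$ Conjecture. The paper does not prove it --- it states it verbatim from~\cite{Kho02} as background and motivation, and it remains an open problem; indeed the whole point of the paper is that, absent the conjecture, only much weaker unconditional bounds (soundness $\frac{23}{24}$ for $d=2$, Theorem~\ref{thm:twotoone-hardness}) are known. So there is no proof in the paper to compare yours against, and your write-up, which explicitly announces itself as a sketch of ``the natural route'' rather than a proof, is correctly calibrated: you have not proved the statement, and nobody has.

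As a survey of the standard template your Steps 1--2 are reasonable (outer Label Cover from the PCP theorem plus parallel repetition, with smoothness, composed with a long-code inner verifier whose consistency relation is exactly $d$-to-$1$), and your Step~3 accurately isolates the genuine obstruction: with perfect completeness one cannot inject independent noise, so the hypercontractivity/invariance-based decoding arguments that drive imperfect-completeness soundness analyses are unavailable, and no substitute structural theorem is known. But be clear that this means the proposal has a gap coextensive with the entire conjecture --- Step~3 is not a step to be filled in, it \emph{is} the open problem. Note also that the paper's actual unconditional theorems take a different (and much more modest) route than your template: they start from the Moshkovitz--Raz Label Cover (Theorem~\ref{thm:mosraz}), run a H{\aa}stad-style function test for the $\fournat$ predicate, and then apply explicit gadget reductions ($\fournat$ to $\alin{2}$ to $2$-to-$1$), which yields a fixed hardness factor rather than the ``soundness $\eps$ for every $\eps>0$'' demanded by the conjecture.
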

Here by $(c, s)$-deciding a CSP we mean the task of determining whether an instance is at least $c$-satisfiable or less than $s$-satisfiable.   It is well known (from the Parallel Repetition Theorem~\cite{FK94,Raz95}) that the conjecture is true if~$d$ is allowed to depend on~$\eps$.  The strength of this conjecture, therefore, is that it is stated for each fixed $d$ greater than~$1$.

The $d$-to-$1$ Conjectures have been used to resolve the approximability of several basic ``satisfiable CSP'' problems.  The first result along these lines was due to Dinur, Mossel, and Regev~\cite{DMR09} who showed that the $2$-to-$1$ Conjecture implies that it is $\NP$-hard to $C$-color a $4$-colorable graph for any constant~$C$. (They also showed hardness for $3$-colorable graphs via another Unique Games variant.)  O'Donnell and Wu~\cite{OW09b} showed that assuming the $d$-to-$1$ Conjecture for any fixed~$d$ implies that it is $\NP$-hard to $(1,\frac58 + \eps)$-approximate instances a certain $3$-bit predicate --- the ``Not Two'' predicate.  This is an optimal result among all $3$-bit predicates, since Zwick~\cite{Zwi98} showed that every satisfiable $3$-bit CSP instance can be efficiently $\frac58$-approximated.  In another example, Guruswami and Sinop~\cite{GS09} have shown that the $2$-to-$1$ Conjecture implies that given a $q$-colorable graph, it is $\NP$-hard to find a $q$-coloring in which less than a $(\frac{1}{q} - O(\frac{\ln q}{q^2}))$ fraction of the edges are monochromatic.  This result would be tight up to the $O(\cdot)$ by an algorithm of Frieze and Jerrum~\cite{FJ97}. It is therefore clear that settling the $d$-to-1 Conjectures, especially in the most basic case of $d=2$, is an important open problem.

Regarding the hardness of the $2$-to-$1$ Label Cover problem, the only evidence we have is a family of integrality gaps for the canonical SDP relaxation of the problem, in~\cite{GKO+10}.  Regarding algorithms for the problem, an important recent line of work beginning in~\cite{ABS10} (see also~\cite{BRS11,GS11,Ste10}) has sought subexponential-time algorithms for Unique Label Cover and related problems.  In particular, Steurer~\cite{Ste10} has shown that for any constant $\beta > 0$ and label set size, there is an $\exp(O(n^{\beta}))$-time algorithm which, given a satisfiable $2$-to-$1$ Label Cover instance, finds an assignment satisfying an $\exp(-O(1/\beta^2))$-fraction of the constraints.  E.g., there is a $2^{O(n^{.001})}$-time algorithm which $(1,s_0)$-approximates $2$-to-$1$ Label Cover, where $s_0 > 0$ is a certain universal constant.

In light of this, it is interesting not only to seek $\NP$-hardness results for certain approximation thresholds, but to additionally seek evidence that \emph{nearly full exponential time} is required for these thresholds.  This can done by assuming the Exponential Time Hypothesis (ETH)~\cite{IP01} and by reducing from the Moshkovitz--Raz Theorem~\cite{MR10}, which shows a near-linear size reduction from \threesat to the standard Label Cover problem with subconstant soundness.  In this work, we show reductions from $\threesat$  to the problem of $(1, s+\eps)$-approximating several CSPs, for certain values of $s$ and for all $\eps >0$.  In fact, though we omit it in our theorem statements, it can be checked that all of the reductions in this paper are quasilinear in size for $\eps = \eps(n) = \Theta\left(\frac{1}{(\log \log n)^\beta}\right)$, for some $\beta > 0$.


\subsection{Our results}     \label{sec:our-results}
In this paper, we focus on proving $\NP$-hardness for the $2$-to-$1$ Label Cover problem.  To the best of our knowledge, no explicit $\NP$-hardness factor has previously been stated in the literature.  However it is ``folklore'' that one can obtain an explicit one for label set sizes $3$ \& $6$ by performing the ``constraint-variable'' reduction on an $\NP$-hardness result for $3$-coloring (more precisely, Max-$3$-Colorable-Subgraph).  The best known hardness for $3$-coloring is due to Guruswami and Sinop~$\cite{GS09}$, who showed a factor $\frac{32}{33}$-hardness via a somewhat involved gadget reduction from the $3$-query adaptive PCP result of~\cite{GLST98}.  This yields $\NP$-hardness of $(1, \frac{65}{66}+ \eps)$-approximating $2$-to-$1$ Label Cover with label set sizes $3$ \& $6$.  It is not known how to take advantage of larger label set sizes.  On the other hand, for label set sizes $2$~\&~$4$ it is known that satisfying $2$-to-$1$ Label Cover instances can be found in polynomial time.

The main result of our paper gives an improved hardness result:
\begin{theorem}\label{thm:twotoone-hardness}
For all $\eps > 0$, $(1, \frac{23}{24} + \eps)$-deciding the $2$-to-$1$ Label Cover problem with label set sizes $3$ \textnormal{\&} $6$ is $\NP$-hard.  
\end{theorem}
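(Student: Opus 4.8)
The plan is to prove Theorem~\ref{thm:twotoone-hardness} by composing a standard Label Cover / dictatorship-test reduction with a gadget reduction, the real content being the choice of intermediate problem and gadget so that the final soundness constant is exactly $\frac{23}{24}$. The baseline to beat is the ``folklore'' $\frac{65}{66}$: it comes from the constraint--variable reduction applied to the Guruswami--Sinop $\frac{32}{33}$-hardness for Max-$3$-Colorable-Subgraph, where the small side carries vertex colours (labels $[3]$), the big side carries the six proper $2$-colourings of an edge, and each $2$-to-$1$ constraint projects an edge-colouring to an endpoint colour; a graph of value $v$ yields a $2$-to-$1$ instance of value exactly $\frac{1+v}{2}$, the factor $\frac12$ loss occurring because a monochromatic edge still satisfies one of its two constraints. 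Since $\frac12\cdot(1-\frac{32}{33})=\frac1{66}$ and we want a loss of $\frac1{24}$, no mere improvement of the $3$-colouring gadget suffices; we need a base predicate $P$ with perfect completeness and $(1,s)$-hardness for which $\alpha(1-s)=\frac1{24}$ is reachable, where $\alpha$ is the fraction of a $P$-gadget forced to be violated whenever the underlying $P$-constraint is.

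Concretely, I would: (i) use the Moshkovitz--Raz theorem to turn \threesat into a quasilinear-size Label Cover instance with perfect completeness and sub-constant soundness; (ii) run a dictatorship-test reduction on it to get $(1,s)$-hardness of a carefully chosen low-arity predicate $P$ over a small domain, with $P$ supporting a balanced pairwise-independent distribution on its accepting assignments (as perfect completeness requires) and chosen so that $P$-constraints are cheaply simulated by $2$-to-$1$ maps over label sizes $3$ \& $6$ (presumably all copies of the single canonical map $\pih$); and (iii) gadget-reduce $P$ to $2$-to-$1$ Label Cover, replacing each $P$-constraint by a constant-size collection of $2$-to-$1$ constraints on the variables (after promoting each variable of $P$ to a $[3]$- or $[6]$-labelled vertex) with perfect completeness, such that any assignment not satisfying the $P$-constraint violates an $\alpha$-fraction of the gadget. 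Tuning $P$ and the gadget to make $1-\alpha(1-s)=\frac{23}{24}$, and pushing $s$ slightly below its claimed value, yields the $+\eps$; the quasilinear size comes from Moshkovitz--Raz. Alternatively one might gadget-reduce directly from \threesat (where $1-s=\frac18-\eps$ by \Hastad), needing a gadget with $\alpha=\frac13$.

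Completeness is routine throughout: a satisfying \threesat assignment gives a satisfying Label Cover labelling, which pushes through the dictator encoding and then, constraint by constraint, through the gadget. Soundness for step (iii) is the usual gadget-composition argument: an assignment to the $2$-to-$1$ instance of value $>1-\alpha(1-s)+\eps$, restricted to gadgets, satisfies the underlying $P$-constraint on more than an $s$-fraction of them, contradicting the hardness of $P$. Soundness for step (ii) is the standard influence-decoding argument (with the usual smoothing step to cope with perfect completeness): a near-perfect assignment to the encoding tables at an outer vertex must list-decode to a few labels, yielding a labelling of the outer Label Cover instance that beats its soundness.

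The main obstacle is pinning down the constant: finding $P$ and the gadget so that $\alpha(1-s)$ is as small as $\frac1{24}$. This means selecting $P$ (arity, domain, accepting set, pairwise-independent support) and the set of $2$-to-$1$ constraints in the gadget, verifying perfect completeness, and proving the tight extremal bound that no ``inconsistent'' assignment satisfies more than $\frac{23}{24}$ of the gadget --- a finite computation best phrased as a linear program over the $\le 6$-valued assignments to the gadget's variables, in the style of the standard gadget/LP framework --- together with a matching example certifying that $\frac1{24}$ is the best possible in the admissible class. The accompanying dictatorship-test analysis for $P$ is the other substantial piece, but it is routine given the right $P$.
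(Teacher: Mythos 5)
Your high-level framework is the same as the paper's, and your numerology is right: one needs a perfectly-complete $(1,s)$-hard predicate $P$ and a $\gadg$-gadget reduction from $P$ to $2$-to-$1$ Label Cover (label sizes $3$ \& $6$) with $(1-\gadg)(1-s)=\tfrac1{24}$. But you explicitly stop short of identifying $P$ or the gadget, calling this ``the main obstacle,'' and that is precisely where all the content of the theorem lies. The proposal as written is therefore a plan, not a proof.

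To fill the gap, the paper takes $P$ to be the $4$-ary predicate $\fournat$ over $\Z_3$ (``at least one of the three values is absent among the four arguments''), proves $(1,\tfrac23+\eps)$-hardness for it, and then composes two small gadgets: a $\tfrac34$-gadget from $\fournat$ to $\alin{2}(\Z_3)$ (one auxiliary variable $y_C$ with the four disequations $v_i+k_i\neq y_C$), and a $\tfrac12$-gadget from $\alin{2}(\Z_3)$ to $2$-to-$1$ Label Cover (the constraint--variable reduction, using that each disequation is a $2$-to-$1$ relation). These compose to $\gadg=\tfrac34+\tfrac14\cdot\tfrac12=\tfrac78$, i.e.\ $\alpha=\tfrac18$, and $\tfrac18\cdot\tfrac13=\tfrac1{24}$, giving $\tfrac{23}{24}$. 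Your alternative of gadget-reducing directly from \threesat with $\alpha=\tfrac13$ is not what the paper does, and it is far from clear such a gadget exists with these label sizes.

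Two smaller but substantive inaccuracies. First, you say ``with the usual smoothing step to cope with perfect completeness''; the paper does the opposite --- it deliberately avoids any added noise, since noise destroys completeness. The exponential damping $(1/2)^{\#\alpha}$ needed for decoding is produced organically by the $\fournat$ test distribution (via the correlation structure of $\y,\z$ given $\x$), and folding is what kills the troublesome low-degree terms; this perfect-completeness Fourier analysis (Lemma~\ref{lem:big-fourier}) is the technical core and is not ``routine.'' Second, pairwise independence of the accepting distribution is not what ``perfect completeness requires''; it is what Austrin--Mossel requires for Unique-Games hardness, which the paper mentions only as a side remark giving a different ($(1-\eps,\tfrac59+\eps)$) conditional bound. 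The $\NP$-hardness argument actually used is a H\aa stad-style Fourier decoding from $d$-to-$1$ Label Cover via Moshkovitz--Raz, and it does not invoke pairwise independence.
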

\noindent By duplicating labels, this result also holds for label set sizes $3k$ \& $6k$ for any $k \in \N^+$.\\

Let us describe the high-level idea behind our result.  The folklore constraint-variable reduction from $3$-coloring to $2$-to-$1$ Label Cover would work just as well if we started from ``$3$-coloring with literals'' instead.  By this we mean the CSP with domain $\Z_3$ and constraints of the form ``$v_i - v_j \neq c \pmod{3}$''.  Starting from this CSP --- which we call $\alin{2}(\Z_3)$ --- has two benefits: first, it is at least as hard as $3$-coloring and hence could yield a stronger hardness result; second, it is a bit more ``symmetrical'' for the purposes of designing reductions.  We obtain the following hardness result for $\alin{2}(\Z_3)$.
\begin{theorem}\label{thm:alin-hardness}
For all $\eps > 0$, it is $\NP$-hard to $(1, \frac{11}{12} + \eps)$-decide the $\alin{2}$ problem.
\end{theorem}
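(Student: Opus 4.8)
The plan is to prove Theorem~\ref{thm:alin-hardness} by composing the Moshkovitz--Raz near-linear-size reduction from \threesat to (projection) Label Cover with subconstant soundness together with a ``Long Code'' dictatorship test over the domain $\Z_3$. Each vertex of the Label Cover instance is assigned a supposed Long Code, i.e.\ a function $f : \Z_3^L \to \Z_3$, and these tables are \emph{folded} over the diagonal action of $\Z_3$ so that $f(x + a\mathbf{1}) = f(x) + a$. A verifier picks a random edge $\{u,v\}$ with projection $\pi$, draws a few points from a carefully chosen distribution that couples the coordinates across $\pi$, and checks a constant-size system of disequations ``(entry of one table) $-$ (entry of another) $\neq c$''; collected over all invocations, these disequations form the output \alin{2} instance. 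It may be cleanest to first produce hardness for an auxiliary CSP --- e.g.\ the paper's \fournat or \twopair --- and then apply a fixed gadget $\gamma$ turning each of its constraints into a constant number of $\Z_3$-disequations, with the soundness loss of $\gamma$ absorbed into the final constant $\frac{11}{12}$.

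For completeness, a satisfying Label Cover labeling induces at each vertex the corresponding ``dictator'' table, which --- because the $\Z_3$-affine functions are exactly the folding-respecting dictators --- passes every check the verifier can make, so the \alin{2} instance is fully satisfiable. For soundness I would argue the contrapositive: if an assignment satisfies more than a $\frac{11}{12} + \eps$ fraction of the disequations, then on a noticeable fraction of edges the corresponding pair of tables passes the test with probability bounded away from what pseudorandom tables achieve. Expanding the tables in $\Z_3$-characters (valued in cube roots of unity), and using that the verifier's coupling makes cross-correlations between $f$ and $g$ decay unless their relevant Fourier coefficients are ``low-degree and matched through $\pi$'', one extracts from such an edge a short list of candidate labels for $u$ and $v$ --- the standard influence-decoding, with the ``no influential coordinate'' case handled by an elementary Fourier bound or the $\Z_3^n$ invariance principle. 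Randomly reading labels off these lists yields a Label Cover labeling of value at least a positive constant depending only on $\eps$, contradicting the subconstant soundness. The precise threshold $\frac{11}{12}$ is then pinned down as the optimum of a finite optimization: the largest fraction of the verifier's checks (after $\gamma$, if used) satisfiable in the pseudorandom regime, evaluated by a short Gaussian / Fourier computation.

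The main obstacle is reconciling \emph{perfect} completeness with a soundness as low as $\frac{11}{12}$ over the non-Boolean alphabet $\Z_3$: the naive noised dictatorship test over $\Z_3^n$ has completeness strictly below $1$, so one must instead work with Khot's \emph{smooth} Label Cover (derivable from Moshkovitz--Raz) together with a noise-free or otherwise structured query distribution. Designing that distribution --- and, if applicable, the auxiliary predicate and the gadget $\gamma$ --- so that completeness stays at $1$ while the finite soundness optimization comes out to exactly $\frac{11}{12}$, rather than the weaker $\frac{32}{33}$ inherited from the known $3$-coloring hardness, is the delicate, problem-specific heart of the argument.
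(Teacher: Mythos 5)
Your high-level architecture is the right one and matches the paper's: reduce from the Moshkovitz--Raz Label Cover via a folded $\Z_3$-Long-Code dictatorship test, and route through an auxiliary four-variable predicate (you even name $\fournat$/$\twopair$) followed by a gadget into disequations. But the proposal stops exactly where the problem-specific work begins, and the places you hedge are the places where the paper's proof actually lives.

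Concretely, there are two gaps. First, the \emph{test distribution}: you acknowledge that a na\"ive noised dictator test won't give perfect completeness, but your suggested fix --- smooth Label Cover plus an invariance-principle argument --- is not what makes this work and is not used in the paper. The paper's resolution is to design a \emph{noiseless} distribution: for each coordinate, $(\x_i,(\y[i])_j,(\z[i])_j,(\w[i])_j)$ is drawn uniformly from the satisfying assignments of the $\twopair$ predicate. This gives perfect completeness for dictators automatically, and the soundness analysis is plain H\aa stad-style Fourier (no invariance, no smoothness): the entire test expectation collapses to a single triple-correlation term $\sum_\alpha \hat f(\pih(\alpha))\hat g(\alpha)^2(-1/2)^{\#\alpha}$, whose geometric decay in $\#\alpha$ is exactly what replaces noise in the decoding. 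Without specifying some such distribution, the ``finite optimization'' you defer to has no object to optimize, and the invocation of smooth Label Cover suggests a detour the argument does not take. Second, the \emph{number $\frac{11}{12}$}: the paper gets it by a clean composition --- $\fournat$ is hard at $(1,\tfrac23+\eps)$, and there is a $\tfrac34$-gadget (introduce one auxiliary $y_C$ per constraint and the four disequations $v_i+k_i\neq y_C$), so Proposition~\ref{prop:gadgets} gives $\tfrac23+\tfrac13\cdot\tfrac34=\tfrac{11}{12}$. Your write-up treats this number as an output of an unspecified computation rather than deriving it, so as it stands the threshold is unjustified. In short: right scaffolding and correct intuition about the obstacle, but the two load-bearing constructions (the $\twopair$-supported test distribution and the $\fournat\to\alin{2}$ gadget) are missing, and the proposed remedies (smoothness, invariance) point in a direction the proof does not need to go.
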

\noindent
As 3-coloring is a special case of $\alin{2}(\Z_3)$, $\cite{GS09}$ also shows that $(1, \frac{32}{33}+\eps)$-deciding $\alin{2}$ is $\NP$-hard for all $\eps > 0$, and to our knowledge this was previously the only hardness known  for $\alin{2}(\Z_3)$. The best current algorithm achieves an approximation ratio of $0.836$ (and does not need the instance to be satisfiable)~\cite{GW04}.
To prove Theorem~\ref{thm:alin-hardness}, we proceed by designing an appropriate ``function-in-the-middle'' dictator test, as in the recent framework of~\cite{OW12}.  Although the~\cite{OW12} framework gives a direct translation of certain types of function-in-the-middle tests into hardness results, we cannot employ it in a black-box fashion.  Among other reasons,~\cite{OW12} assumes that the test has ``built-in noise'', but we cannot afford this as we need our test to have perfect completeness.

Thus, we need a different proof to derive a hardness result from this function-in-the-middle test.  We first were able to accomplish this by an analysis similar to the Fourier-based proof of $2\mathsf{Lin}(\Z_2)$ hardness given in Appendix~F of~\cite{OW12}.  Just as that proof ``reveals'' that the function-in-the-middle $2\mathsf{Lin}(\Z_2)$ test can be equivalently thought of as \Hastad's $3\mathsf{Lin}(\Z_2)$ test composed with the $3\mathsf{Lin}(\Z_2)$-to-$2\mathsf{Lin}(\Z_2)$ gadget of~\cite{TSSW00}, our proof for the $\alin{2}(\Z_3)$ function-in-the-middle test revealed it to be the composition of a function test for a certain four-variable CSP with a gadget.  We have called the particular four-variable CSP \notall, or $\fournat$ for short.  Because it is a $4$-CSP, we are able to prove the following $\NP$-hardness of approximation result for it using a classic, \Hastad-style Fourier-analytic proof.
\begin{theorem}\label{thm:fournat-hardness}
For all $\eps > 0$, it is $\NP$-hard to $(1, \frac{2}{3} + \eps)$-decide the $\fournat$ problem.
\end{theorem}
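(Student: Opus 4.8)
The plan is to prove Theorem~\ref{thm:fournat-hardness} by a direct Label-Cover-plus-Long-Code reduction in the style of \Hastad~\cite{Has01}; because $\fournat$ has arity four, no invariance-principle machinery and no ``function-in-the-middle'' framework are needed. We start from an instance of Label Cover with the projection property that is $\NP$-hard to $(1,\delta)$-decide for an arbitrarily small constant $\delta = \delta(\eps) > 0$; such instances arise by parallel repetition~\cite{Raz95} of the PCP theorem applied to \threesat, and one inherits quasilinear size by starting instead from the Moshkovitz--Raz Theorem~\cite{MR10}. To each vertex of the Label Cover instance we attach a Long Code table over $\Z_3$, i.e.\ a function $f : \Z_3^{m} \to \Z_3$ where $m$ is that vertex's label-set size, and we fold every table both over the ``add-$\mathbf{1}$'' shift action of $\Z_3$ and over the ``true'' coordinate, so that constant and other syntactically degenerate strategies are ruled out. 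The verifier picks a uniformly random Label Cover edge with projection $\pi$, samples four query points from a carefully chosen joint distribution $\mathcal{D}_\pi$ on the two tables attached to the endpoints, reads the four $\Z_3$-symbols, and accepts iff they satisfy $\fournat$. This is an instance of the $\fournat$ problem, so the theorem follows once we exhibit such a $\mathcal{D}_\pi$ with completeness $1$ and soundness $\tfrac{2}{3}+\eps$.

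For completeness I would design $\mathcal{D}_\pi$ so that whenever the four tables are the honest Long Codes of a satisfying Label Cover labeling, the four reads always land in the accepting set of $\fournat$. Concretely, the support of $\mathcal{D}_\pi$ is forced to consist of four-tuples of points whose long-side coordinates, once pushed through $\pi$, are consistent with $\fournat$ being satisfied; the remaining freedom in $\mathcal{D}_\pi$ is spent to keep the single-table marginals uniform, as the folding requires, and otherwise to spread $\mathcal{D}_\pi$ out so that the soundness analysis meets no structure beyond what completeness forces. Verifying completeness is then a finite check --- the point being that $\fournat$ was \emph{defined} precisely so that such a $\mathcal{D}_\pi$ exists alongside a clean soundness bound, which is exactly what the decomposition of the $\alin{2}(\Z_3)$ function-in-the-middle test isolates.

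For soundness, suppose the verifier accepts with probability $\tfrac{2}{3}+\eps$. Expand the complex-valued indicator of the $\fournat$ predicate in the $\Z_3$-Fourier basis (characters $\omega^{\langle \alpha,\cdot\rangle}$, $\omega = e^{2\pi i/3}$), substitute the Fourier expansions of the four table values, and take the expectation over $\mathcal{D}_\pi$ and over the random edge. The ``trivial'' terms --- those whose characters are zero on every table --- contribute exactly $\tfrac{2}{3}$, so the surplus $\eps$ must come from ``cross terms'', each carrying a nonzero character on some long-side table and, because $\mathcal{D}_\pi$ was built from $\pi$, a matching nonzero character on the short-side table it projects to. A standard Cauchy--Schwarz and averaging argument then shows that for an $\Omega(\eps)$-fraction of the edges both endpoint tables carry a coordinate of bounded-degree influence at least $\eps^{O(1)}$ whose $\pi$-images agree; selecting for each vertex a uniformly random coordinate from its influential set yields a labeling satisfying an $\eps^{O(1)}$-fraction of the edges, which contradicts the soundness $\delta$ once $\delta \ll \eps^{O(1)}$. (Folding is what guarantees the influential coordinates found this way are genuine rather than artifacts of a constant term.) Hence acceptance probability above $\tfrac{2}{3}+\eps$ is impossible, and the theorem follows.

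I expect the main obstacle to be reconciling \emph{perfect} completeness with the sharp soundness constant $\tfrac{2}{3}$ while keeping the Fourier analysis elementary: in a noisy \Hastad-style test one damps the high-degree and ``mixed'' Fourier contributions with a noise operator, but perfect completeness forbids any noise, so $\fournat$ and $\mathcal{D}_\pi$ must be chosen so that these contributions either vanish identically or can be large only when they already certify a shared influential coordinate. Pinning down a four-variable predicate with this property --- and checking it --- is the technical heart, and it is essentially why $\fournat$ rather than some other four-variable CSP is the right object to extract from the $\alin{2}(\Z_3)$ test. A secondary nuisance is that the whole analysis lives over $\Z_3$ rather than $\Z_2$: characters are complex cube roots of unity, ``parity'' becomes $\Z_3$-linear forms, and the folding and influence conventions must be arranged so the bookkeeping in the soundness sum closes cleanly. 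Finally, one should track parameters so that $\eps$ can be a fixed positive constant --- and, with more care, a slowly vanishing function of $n$ --- without the reduction exceeding quasilinear size.
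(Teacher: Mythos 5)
Your overall strategy — Label Cover composed with $\Z_3$ Long Codes, a four-query $\fournat$ test, Fourier expansion, and a shared-influential-coordinate decoding — is the same outline the paper follows. But there is a concrete error in your soundness sketch that would derail the derivation of the constant $\tfrac{2}{3}$, and a related imprecision about the shape of the test.

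First, the test. The paper's $\fournat$ test is \emph{asymmetric}: it reads $f$ once (at $\x$) and $g$ three times (at $\y,\z,\w$), with the per-coordinate joint distribution supported on satisfying assignments of $\buddy$. Your description — ``four query points from a joint distribution on the two tables'' — leaves the split unspecified, and the split matters for what follows.

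Second, and more importantly, you write that the ``trivial'' terms (characters zero on every table) contribute exactly $\tfrac{2}{3}$ and that the surplus $\eps$ must come from cross terms that carry matching nonzero characters on both endpoint tables. That is not how the $\tfrac{2}{3}$ arises, and a decoding argument built on this premise would not close. When you arithmetize $\fournat$, the constant term is $\tfrac{5}{9}$ (the density of $\fournat$). Two of the remaining terms vanish by folding and independence, and the key identity is
\begin{equation*}
\E[\fournat(f(\x), g(\y), g(\z), g(\w))]
= \tfrac{5}{9} - \tfrac{2}{3}\, \Re\, \E[f(\x)g(\y)g(\z)] - \tfrac{2}{9}\, \Re\, \E[g(\y)g(\z)g(\w)].
\end{equation*}
The last term involves \emph{only} $g$; it carries no character on $f$ at all, so it provides no shared influential coordinate and cannot be fed into a Cauchy--Schwarz decoding. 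The paper deals with it by the trivial bound $\Re\, \E[g(\y)g(\z)g(\w)] \ge -\tfrac12$ (since the product lies in $\roots_3$), which contributes at most an additional $\tfrac{1}{9}$ and is what lifts $\tfrac{5}{9}$ to $\tfrac{2}{3}$. Only then does the single remaining cross term $\E[f(\x)g(\y)g(\z)]$ get expanded in Fourier, yielding $\sum_{\alpha}\hat{f}(\pih(\alpha))\hat{g}(\alpha)^2(-\tfrac12)^{\#\alpha}$, and it is \emph{this} term alone that supports the decoding. If you lump the $g$-only term in with ``trivial'' terms, your base constant is wrong; if you treat it as a ``cross term'' expected to yield a shared coordinate, your decoding step fails on it. You need to recognize it as a third category — nonzero but decoding-useless — and bound it separately. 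Everything else you describe (folding to kill the zero character, MR starting point, bounded-degree influence decoding) is consistent with the paper's execution.

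A minor further note: the paper uses only shift-folding ($f(x+c)=f(x)+c$); your proposed additional folding over a ``true'' coordinate is not used and not needed, since the $\fournat$ CSP already allows arbitrary additive shifts $k_i$ in its constraints.
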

\noindent
Thus, the final form in which we present our Theorem~\ref{thm:twotoone-hardness} is as a reduction from Label-Cover to $\fournat$ using a function test (yielding Theorem~\ref{thm:fournat-hardness}), followed by a $\fournat$-to-$\alin{2}(\Z_3)$ gadget (yielding Theorem~\ref{thm:alin-hardness}), followed by the constraint-variable reduction to $2$-to-$1$ Label Cover.  Indeed, all of the technology needed to carry out this proof was in place for over a decade, but without the function-in-the-middle framework of~\cite{OW12} it seems that pinpointing the $\fournat$ predicate as a good starting point would have been unlikely.

\subsection{Organization}


We leave to Section~\ref{sec:prelims} most of the definitions, including those of the CSPs we use.  The heart of the paper is in Section~\ref{sec:better}, where we give both the $\alin{2}(\Z_3)$ and $\fournat$ function tests, explain how one is derived from the other, and then perform the Fourier analysis for the $\fournat$ test.  The actual hardness proof for $\fournat$ is presented in Section~\ref{app:fournat}, and it follows mostly the techniques put in place by \Hastad in \cite{Has01}.  

\section{Preliminaries}\label{sec:prelims}

We primarily work with strings $x \in \Z_3^K$ for some integer $K$.  We write $x_i$ to denote the $i$th coordinate of~$x$.  Oftentimes, our strings $y \in \Z_3^{dK}$ are ``blocked'' into $K$ ``blocks'' of size $d$.  In this case, we write $y[i] \in \Z_3^{d}$ for the $i$th block of~$y$, and $(y[i])_j \in \Z_3$ for the $j$th coordinate of this block.  Define the function $\pi:[dK]\rightarrow [K]$ such that $\pi(k) = i$ if $k$ falls in the $i$th block of size $d$ (e.g., $\pi(k) = 1$ for $1 \leq k \leq d$, $\pi(k) = 2$ for $d+1 \leq k \leq 2d$, and so on).

\subsection{Definitions of problems}

An instance $\instance$ of a \emph{constraint satisfaction problem} (CSP) is a set of variables $V$, a set of labels $D$, and a weighted list of constraints on these variables.  We assume that the weights of the constraints are nonegative and sum to 1.  The weights therefore induce a probability distribution on the constraints.  Given an assignment to the variables $f:V\rightarrow D$, the \emph{value} of $f$ is the probability that $f$ satisfies a constraint drawn from this probability distribution.  The \emph{optimum} of $\instance$ is the highest value of any assignment.  We say that an $\instance$ is $s$-\emph{satisfiable} if its optimum is at least $s$.  If it is 1-satisfiable we simply call it satisfiable.

We define a CSP $\calP$ to be a set of CSP instances.  Typically, these instances will have similar constraints.  We will study the problem of \emph{$(c, s)$-deciding} $\calP$.  This is the problem of determining whether an instance of $\calP$ is at least $c$-satisfiable or less than $s$-satisfiable.  Related is the problem of \emph{$(c, s)$-approximating} $\calP$, in which one is given a $c$-satisfiable instance of $\calP$ and asked to find an assignment of value at least $s$.  It is easy to see that $(c, s)$-deciding $\calP$ is at least as easy as $(c, s)$-approximating $\calP$.  Thus, as all our hardness results are for $(c, s)$-deciding CSPs, we also prove hardness for $(c, s)$-approximating these CSPs.

We now state the three CSPs that are the focus of our paper.
\paragraph{\textsf{2-NLin($\Z_3$)}:} In this CSP the label set is $\Z_3$ and the constraints are of the form
\begin{equation*}
v_i - v_j \neq a \pmod{3}, \quad a \in \Z_3.
\end{equation*}
The special case when each RHS is~$0$ is the $3$-coloring problem.  We often drop the $(\Z_3)$ from this notation and simply write $\alin{2}$. The reader may think of the `\textsf{N}' in $\alin{2}(\Z_3)$ as standing for `N'on-linear, although we prefer to think of it as standing for `N'early-linear.  The reason is that when generalizing to moduli $q > 3$, the techniques in this paper generalize to constraints of the form ``$v_i  - v_j \pmod{q} \in \{a, a+1\}$'' rather than ``$v_i - v_j \neq a \pmod{q}$''.  For the ternary version of this constraint, ``$v_i - v_j + v_k \pmod{q} \in \{a,a+1\}$'', it is folklore\footnote{Venkatesan Guruswami, Subhash Khot personal communications.} that a simple modification of \Hastad's work~\cite{Has01} yields $\NP$-hardness of $(1,\frac{2}{q})$-approximation.

\paragraph{\notall:}  For the \notall problem, denoted $\fournat$, we define $\fournat \co \Z_3^4 \rightarrow \{0, 1\}$ to have output $1$ if and only if at least one of the elements of $\Z_3$ is not present among the four inputs.  The $\fournat$ CSP has label set $D = \Z_3$ and constraints of the form $\fournat(v_1 + k_1, v_2 + k_2, v_3 + k_3, v_4 + k_4) = 1$, where the $k_i$'s are constants in $\Z_3$.

We additionally define the ``\textsf{Two Pairs}'' predicate $\buddy \co \Z_3^4 \rightarrow \{0, 1\}$, which has output $1$ if and only if its input contains two distinct elements of $\Z_3$, each appearing twice.  Note that an input which satisfies $\buddy$ also satisfies $\fournat$.

\paragraph{$\mathbf{d}$-to-1 Label Cover:}  An instance of the $d$-to-1 Label Cover problem is a bipartite graph $G=(U \cup V, E)$, a label set size $K$, and a $d$-to-1 map $\pi_e:[dK] \rightarrow [K]$ for each edge $e \in E$.  The elements of $U$ are labeled from the set $[K]$, and the elements of $V$ are labeled from the set $[dK]$.  A labeling $f: U \cup V \rightarrow [dK]$ satisfies an edge $e = (u, v)$ if $\pi_e(f(v)) = f(u)$.  Of particular interest is the $d = 2$ case, i.e., 2-to-1 Label Cover.

Label Cover serves as the starting point for most $\NP$-hardness of approximation results.  We use the following theorem of Moshkovitz and Raz:
\begin{theorem}[\cite{MR10}]\label{thm:mosraz}
For any $\eps = \eps(n) \geq n^{-o(1)}$ there exists $K, d \leq 2^{\mathrm{poly}(1/\eps)}$ such that the problem of deciding a \threesat instance of size $n$ can be Karp-reduced in $\mathrm{poly}(n)$ time to the problem of $(1, \eps)$-deciding  $d$-to-1 Label Cover instance of size $n^{1+o(1)}$ with label set size~$K$.
\end{theorem}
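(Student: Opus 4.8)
The plan is to obtain Theorem~\ref{thm:mosraz} as a light repackaging of the two-query projection PCP of Moshkovitz and Raz. Their result already does all of the work: for any $\eps = \eps(n) \geq n^{-o(1)}$ it gives a $\mathrm{poly}(n)$-time Karp reduction from $\threesat$ (passing from $\threesat$ to the $\NP$-complete source problem of their reduction costs only a linear-size blow-up) to $(1,\eps)$-deciding a Label Cover instance of size $n^{1+o(1)}$ with label sets $[L]$ and $[R]$ of size at most $2^{\mathrm{poly}(1/\eps)}$, in which every edge $e$ carries an \emph{arbitrary} map $\pi_e \co [L] \to [R]$ (a ``projection constraint''). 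This is exactly the ingredient needed for the $n^{1+o(1)}$ size bound: plain parallel repetition drives the soundness down to $\eps$ but only at the cost of an $n^{\Theta(\log(1/\eps))}$ blow-up, whereas the Moshkovitz--Raz construction is near-linear. The one remaining point is that a general projection $\pi_e$ need not be $d$-to-$1$ for any fixed $d$ --- already the textbook clause/variable game has fibers of sizes $3$ and $4$ --- so I first regularize every fiber to a common size.

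I would do this by dummy-padding. For an edge $e$ and $i \in [R]$, write $F_{e,i} = \pi_e^{-1}(i)$, so that $\sum_i |F_{e,i}| = L$. Set $d := L$, keep $K := R$ as the $U$-label set, and enlarge the $V$-label set from $[L]$ to $[dK] = [LR]$ by adjoining $dR - L$ fresh ``dummy'' symbols. For each edge $e$, define $\pi'_e \co [LR] \to [R]$ to agree with $\pi_e$ on $[L]$ and to spread the dummies over $[R]$ so that label $i$ receives exactly $d - |F_{e,i}|$ of them; this is possible since $\sum_i (d - |F_{e,i}|) = dR - L$, and it makes $\pi'_e$ exactly $d$-to-$1$. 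All label sets remain at most $2^{\mathrm{poly}(1/\eps)}$ and the instance size grows only by the factor $R = n^{o(1)}$, so the $n^{1+o(1)}$ bound survives. Completeness is immediate: a satisfying assignment of the original instance uses no dummy and still satisfies every edge. For soundness, from an assignment $(f',g')$ of the new instance I form an old assignment with $g := g'$ and $f(v) := f'(v)$ on those $v$ with $f'(v) \in [L]$ (arbitrary elsewhere); every edge with $f'(v) \in [L]$ that $(f',g')$ satisfies is also satisfied by $(f,g)$ in the original instance, so those edges form less than an $\eps$-fraction, and it remains only to bound the edges on which the $V$-side plays a dummy.

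That last point is the crux and essentially the only subtlety. If a given dummy symbol always projected to the same $U$-label, the prover controlling $V$ could answer that dummy everywhere and let the prover controlling $U$ agree with it, satisfying every edge and wrecking soundness. To prevent this I would assign the dummies to $[R]$ \emph{incoherently} across edges --- for instance by hashing them through the edge index, or by choosing the assignments at random and then fixing the (exponentially unlikely) bad event so as to keep the reduction deterministic --- so that for every fixed $U$-side assignment $g'$ and every dummy symbol $\delta$, only an $O(1/R)$-fraction of edges $e$ satisfy $\pi'_e(\delta) = g'(u)$. Then the dummy-played edges contribute at most $O(1/R)$ to the value, the new instance has value less than $\eps + O(1/R)$, and since $R \geq 2^{\mathrm{poly}(1/\eps)} \gg 1/\eps$ this loss is absorbed by starting the whole argument from a slightly smaller value of $\eps$. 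The remaining bookkeeping --- $\mathrm{poly}(n)$-time computability of the padded instance and the size accounting --- is routine.
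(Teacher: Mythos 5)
The paper does not prove this statement: it is cited verbatim to \cite{MR10} and used as a black box. So you are not proposing an alternative to a proof in the paper but rather supplying a derivation of the cited theorem from the raw Moshkovitz--Raz projection PCP. Your instinct that some regularization is needed (general projections are not $d$-to-$1$, and parallel repetition alone would blow up the size) is right, and you correctly identify the crux: dummy labels must not be exploitable. But the guarantee you posit is too weak to close the argument, and a secondary numerical claim is backwards.

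The guarantee you ask for is that for every fixed $U$-assignment $g'$ and every single dummy symbol $\delta$, at most an $O(1/R)$-fraction of edges $e=\{u,v\}$ have $\pi'_e(\delta)=g'(u)$. A cheating prover, however, plays a \emph{different} dummy $\delta_v$ at each $v\in V$. The dummy-satisfied fraction is $\frac{1}{|E|}\sum_{e=\{u,v\}} \mathbf{1}\bigl[\pi'_e(\delta_v)=g'(u)\bigr]$, which in a $V$-regular graph equals $\E_v\bigl[\max_\delta c_v(\delta,g')\bigr]$ when the prover optimizes $\delta_v$ per vertex, where $c_v(\delta,g')$ is the fraction of edges at $v$ satisfied by dummy $\delta$. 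Your guarantee controls $\max_\delta \E_v\bigl[c_v(\delta,g')\bigr]$, not $\E_v\bigl[\max_\delta c_v(\delta,g')\bigr]$, and the latter can be much larger (for each $v$ the prover can pick whichever of the $|D|\approx LR$ dummies happens to be lucky for that $v$). Ruling this out is not an ``exponentially unlikely bad event'' that a single union bound dispatches: you would need a per-vertex, per-$g'$ concentration statement, which in turn requires control over the degrees of the Moshkovitz--Raz graph and a union bound over the $R^{|U|}$ choices of $g'$ --- a genuine piece of work, not routine bookkeeping.

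Separately, you write that the $O(1/R)$ loss is absorbed ``since $R \geq 2^{\mathrm{poly}(1/\eps)} \gg 1/\eps$,'' but the Moshkovitz--Raz theorem gives the \emph{upper} bound $R\le 2^{\mathrm{poly}(1/\eps)}$; there is no stated lower bound on $R$, so a priori $1/R$ need not be negligible next to $\eps$. Both gaps --- the insufficient incoherence guarantee and the missing lower bound on $R$ --- would have to be repaired (e.g.\ by establishing a stronger expansion/smoothness property of the base game, or by a different regularization that keeps the padded game \emph{exactly} equivalent to the original rather than equivalent up to an additive loss) before this counts as a proof of the theorem as stated.
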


\subsection{Gadgets}

A typical way of relating two separate CSPs is by constructing a \emph{gadget reduction} which translates from one to the other.  A gadget reduction from $\csp_1$ to $\csp_2$ is one which maps any $\csp_1$ constraint into a weighted set of $\csp_2$ constraints.  The $\csp_2$ constraints are over the same set of variables as the $\csp_1$ constraint, plus some new, auxiliary variables (these auxiliary variables are not shared between constraints of $\csp_1$).  We require that for every assignment which satisfies the $\csp_1$ constraint, there is a way to label the auxiliary variables to fully satisfy the $\csp_2$ constraints.  Furthermore, there is some parameter $0<\gadg <1$ such that for every assignment which does not satisfy the $\csp_1$ constraint, the optimum labeling to the auxiliary variables will satisfy exactly $\gadg$ fraction of the $\csp_2$ constraints.  Such a gadget reduction we call a \emph{$\gadg$-gadget-reduction} from $\csp_1$ to $\csp_2$.  The following proposition is well-known:
\begin{proposition}\label{prop:gadgets}
Suppose it is $\NP$-hard to $(c, s)$-decide $\csp_1$.  If there exists a $\gadg$-gadget-reduction from $\csp_1$ to $\csp_2$, then it is $\NP$-hard to $(c+(1-c)\gadg, s + (1-s)\gadg)$-decide $\csp_2$.
\end{proposition}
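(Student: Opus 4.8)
\medskip

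The plan is to show that the $\gadg$-gadget-reduction is itself the desired hardness reduction. Given an instance $\instance_1$ of $\csp_1$, replace each constraint $C$ of $\instance_1$, say of weight $w_C$, by the weighted family of $\csp_2$ constraints produced by the gadget on $C$, scaling the weights inside that family so that they sum to $w_C$; let $\instance_2$ be the resulting $\csp_2$ instance, whose total constraint weight is still $1$. Since the gadget reduction runs in polynomial time, it suffices to verify two implications: (i) if $\instance_1$ is $c$-satisfiable then $\instance_2$ is $(c+(1-c)\gadg)$-satisfiable, and (ii) if $\instance_2$ is $(s+(1-s)\gadg)$-satisfiable then $\instance_1$ is $s$-satisfiable. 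Together these say the reduction sends $c$-satisfiable instances to $(c+(1-c)\gadg)$-satisfiable instances and instances that are not $s$-satisfiable to instances that are not $(s+(1-s)\gadg)$-satisfiable, which is exactly what transfers $\NP$-hardness of $(c,s)$-deciding $\csp_1$ to $\NP$-hardness of $(c+(1-c)\gadg,\,s+(1-s)\gadg)$-deciding $\csp_2$.

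For (i), take an assignment $f$ to $\instance_1$ of value at least $c$, and keep $f$ on the original variables. For each $\csp_1$ constraint $C$ satisfied by $f$, the defining property of the gadget lets us set $C$'s auxiliary variables so that \emph{all} of $C$'s $\csp_2$ constraints are satisfied; for each $C$ not satisfied by $f$, set $C$'s auxiliary variables to the optimal auxiliary labeling, satisfying a $\gadg$ fraction of $C$'s $\csp_2$ constraints. Since auxiliary variables are not shared across gadgets, these choices do not conflict. Writing $t$ for the (weighted) fraction of $\csp_1$ constraints satisfied by $f$, the value of the resulting assignment to $\instance_2$ is at least $t\cdot 1 + (1-t)\gadg = \gadg + (1-\gadg)t$, which is nondecreasing in $t$ because $\gadg < 1$; since $t \geq c$ this is at least $c+(1-c)\gadg$.

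For (ii), let $g$ be any assignment to $\instance_2$ and let $f$ be its restriction to the original variables of $\instance_1$, with value $\alpha$ on $\instance_1$. For a $\csp_1$ constraint $C$ satisfied by $f$, $g$ satisfies at most all of $C$'s $\csp_2$ constraints; for $C$ not satisfied by $f$, the gadget property that the \emph{optimal} auxiliary labeling satisfies only a $\gadg$ fraction of $C$'s $\csp_2$ constraints forces $g$ to satisfy at most a $\gadg$ fraction of them. Summing over the constraints of $\instance_1$ against their weights, the value of $g$ on $\instance_2$ is at most $\alpha + (1-\alpha)\gadg = \gadg + (1-\gadg)\alpha$. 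Hence if $\instance_2$ is $(s+(1-s)\gadg)$-satisfiable, taking $g$ optimal gives $\gadg + (1-\gadg)\alpha \geq \gadg + (1-\gadg)s$, and since $\gadg < 1$ this yields $\alpha \geq s$, i.e.\ $\instance_1$ is $s$-satisfiable.

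This is essentially bookkeeping, so I do not expect a genuine obstacle. The one point requiring care is the soundness direction: one must use that ``the optimal auxiliary labeling satisfies exactly a $\gadg$ fraction'' upper-bounds the performance of \emph{every} auxiliary labeling by $\gadg$, and that the auxiliary variables of distinct gadgets are disjoint, so that the per-gadget bounds combine additively against the weights. One should also double-check the weight normalization so that the ``fraction satisfied'' quantities in $\instance_1$ and $\instance_2$ line up correctly.
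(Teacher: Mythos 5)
The paper states Proposition~\ref{prop:gadgets} without proof (``well-known''), so there is no in-paper argument to compare against. Your proof is correct and is the standard one: because auxiliary variables are disjoint across gadgets, fixing any assignment $f$ to $\instance_1$ of value $t$ and optimizing the auxiliaries gadget by gadget yields a $\csp_2$ value of exactly $t + (1-t)\gadg$, while conversely any $\csp_2$ assignment restricts to an $\instance_1$ assignment of some value $\alpha$ and has $\csp_2$ value at most $\alpha + (1-\alpha)\gadg$; monotonicity of $t \mapsto t + (1-t)\gadg$ (since $\gadg < 1$) then carries the thresholds $c$ and $s$ to $c + (1-c)\gadg$ and $s + (1-s)\gadg$ exactly as you describe.
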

We note that the notation $\gadg$-gadget-reduction is similar to a piece of notation employed by \cite{TSSW00}, but the two have different (though related) definitions.

\subsection{Fourier analysis on $\Z_3$}\label{sec:fourier}

Let $\omega = e^{2\pi i/3}$ and set $\roots_3 = \{\omega^0, \omega^1, \omega^2\}$.  For $\alpha \in \Z_3^n$, consider the Fourier character $\chi_\alpha :\Z_3^n \rightarrow \roots_3$ defined as $\chi_\alpha(x) = \omega^{\alpha\cdot x}$.  Then it is easy to see that $\E[\chi_{\alpha}(\x)\overline{\chi_{\beta}(\x)}] = {\bf 1}[\alpha = \beta]$, where here and throughout $\bx$ has the uniform probability distribution on $\Z_3^n$ unless otherwise specified..  As a result, the Fourier characters form an orthonormal basis for the set of functions $f:\Z_3^n \rightarrow \roots_3$ under the inner product $\la f,g \ra = \E[f(\x) g(\x)]$; i.e.,
\begin{equation*}
f = \sum_{\alpha \in \Z_3^n} \hat{f}(\alpha)\chi_\alpha,
\end{equation*}
where the $\hat{f}(\alpha)$'s are complex numbers defined as $\hat{f}(\alpha) = \E[f(\x)\overline{\chi_\alpha(\x)}]$.  For $\alpha \in \Z_3^n$, we use the notation $\vert \alpha \vert$ to denote $\sum \alpha_i$ and $\#\alpha$ to denote the number of nonzero coordinates in $\alpha$.  When $d$ is clear from context and $\alpha \in \Z_3^{dK}$, define $\pih(\alpha) \in \Z_3^K$ so that $(\pih(\alpha))_i \equiv \vert \alpha[i] \vert \pmod{3}$ (recall the notation $\alpha[i]$ from the beginning of this section).

We have Parseval's identity: for
every $f:\Z_3^n \rightarrow \roots_3$ it holds that $\sum_{\alpha\in\Z_3^n}\vert\hat{f}(\alpha)\vert^2 = 1$.  Note that this implies that $\vert \hat{f}(\alpha) \vert \leq 1$ for all $\alpha$, as otherwise $\hat{f}(\alpha)^2$ would be greater than 1.
A function $f:\Z_3^n\rightarrow \Z_3$ is said to be \emph{folded} if for every $x \in \Z_3^n$ and $c \in \Z_3$, it holds that $f(x + c) = f(x) +c$, where $(x+c)_i = x_i + c$.
\begin{proposition}
Let $f:\Z_3^n \rightarrow \roots_3$ be folded.  Then $\hat{f}(\alpha)\neq 0 \Rightarrow \vert \alpha \vert \equiv 1 \pmod{3}$.
\end{proposition}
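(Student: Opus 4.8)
The plan is to exploit the folding relation directly inside the defining formula $\hat f(\alpha) = \E[f(\bx)\overline{\chi_\alpha(\bx)}]$, by performing the uniform change of variables $\bx \mapsto \bx + \mathbf{1}$, where $\mathbf{1} = (1,1,\ldots,1) \in \Z_3^n$. Since $\bx$ is uniform on $\Z_3^n$, so is $\bx + \mathbf{1}$, so the value of the expectation is unchanged; the point is that applying the substitution term-by-term picks up a scalar factor that must equal $1$ whenever $\hat f(\alpha) \neq 0$.

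Concretely, I would first record the two elementary facts needed. Identifying $\Z_3$ with $\roots_3$ via $a \mapsto \omega^a$, the folding hypothesis $f(x + c) = f(x) + c$ becomes $f(x + c\mathbf{1}) = \omega^c f(x)$, and in particular $f(x + \mathbf{1}) = \omega f(x)$. Second, straight from $\chi_\alpha(x) = \omega^{\alpha \cdot x}$ and $\alpha \cdot \mathbf{1} = \sum_i \alpha_i = \vert \alpha \vert$, we get $\chi_\alpha(x + \mathbf{1}) = \omega^{\vert \alpha \vert}\,\chi_\alpha(x)$. Combining these in the expectation gives
\[
\hat f(\alpha) \;=\; \E\bigl[f(\bx + \mathbf{1})\,\overline{\chi_\alpha(\bx + \mathbf{1})}\bigr] \;=\; \E\bigl[\omega\, f(\bx)\cdot \overline{\omega^{\vert \alpha \vert}}\;\overline{\chi_\alpha(\bx)}\bigr] \;=\; \omega^{\,1 - \vert \alpha \vert}\,\hat f(\alpha).
\]
Hence $\bigl(1 - \omega^{\,1 - \vert \alpha \vert}\bigr)\hat f(\alpha) = 0$, so either $\hat f(\alpha) = 0$ or $\omega^{\,1 - \vert \alpha \vert} = 1$; the latter holds exactly when $\vert \alpha \vert \equiv 1 \pmod 3$. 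This is precisely the claimed implication.

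There is essentially no real obstacle here: the only thing to be careful about is the bookkeeping of the identification between additive shifts in $\Z_3$ on the range side and multiplication by powers of $\omega$ in $\roots_3$, together with the companion fact that a global additive shift of the input multiplies each character $\chi_\alpha$ by $\omega^{\vert \alpha \vert}$. Once those two observations are in place, the result is a one-line computation, and it is the standard folding argument specialized to $\Z_3$.
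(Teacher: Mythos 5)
Your proof is correct and is essentially the paper's own argument: both shift $\bx$ by the all-ones vector, use folding to factor out $\omega$, and use the multiplicativity of the character to factor out $\omega^{\vert\alpha\vert}$, concluding $\hat f(\alpha) = \omega^{1-\vert\alpha\vert}\hat f(\alpha)$. You are slightly more explicit than the paper in spelling out the case split ``$\hat f(\alpha)=0$ or $\omega^{1-\vert\alpha\vert}=1$,'' but the approach and computation are the same.
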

\begin{proof}
\begin{equation*}
\hat{f}(\alpha) = \E[f(\x + 1)\overline{\chi_\alpha(\x+1)}]
=\E[\omega f(\x)\overline{\chi_\alpha(\x)}\overline{\chi_\alpha(1, 1, \ldots, 1)}]
= \omega \overline{\chi_\alpha(1, 1, \ldots, 1)}\hat{f}(\alpha).
\end{equation*}
This means that $\omega\overline{\chi_\alpha(1, 1, \ldots, 1)}$ must be 1.  Expanding this quantity,
\begin{equation*}
\omega\overline{\chi_\alpha(1, 1, \ldots, 1)} = \omega^{1 - \alpha \cdot (1, 1, \ldots, 1)} = \omega^{1 - \vert \alpha \vert}.
\end{equation*}
So, $\vert \alpha \vert \equiv 1 \pmod{3}$, as promised.
\end{proof}

\section{2-to-1 hardness}\label{sec:better}

In this section, we give our hardness result for 2-to-1 Label Cover, following the proof outline described at the end of Section~\ref{sec:our-results}.
\begin{named}{Theorem~\ref{thm:twotoone-hardness} (restated)}
For all $\eps > 0$, it is $\NP$-hard to $(1, \frac{23}{24} + \eps)$-decide the $2$-to-$1$ Label Cover problem.
\end{named}

First, we state a pair of simple gadget reductions:

\begin{lemma}\label{lem:fournat-alin}
There is a $3/4$-gadget-reduction from $\fournat$ to $\alin{2}$.
\end{lemma}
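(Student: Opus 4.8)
The plan is to build the gadget around a single auxiliary variable whose intended meaning is ``the value of $\Z_3$ missing from the four inputs.'' Since a $\alin{2}$ constraint $w_i - w_j \neq a \pmod 3$ is unchanged, apart from a substitution on its right-hand side, when either variable is translated by a constant, I would first dispose of the shift constants: it suffices to give, for the ``canonical'' constraint $\fournat(x_1,x_2,x_3,x_4) = 1$, a gadget that produces constraints of the form $x_i - y \neq 0 \pmod 3$, and then, for a general $\fournat$ constraint $\fournat(v_1 + k_1, \dots, v_4 + k_4) = 1$, substitute $x_i = v_i + k_i$ so that each produced constraint becomes the genuine $\alin{2}$ constraint $v_i - y \neq -k_i \pmod 3$. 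The gadget itself introduces one fresh auxiliary variable $y$ (not shared with any other $\fournat$ constraint) and outputs the four constraints $x_i - y \neq 0 \pmod 3$ for $i = 1,2,3,4$, each with weight $\frac14$.

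For completeness, if an assignment satisfies $\fournat(x_1,x_2,x_3,x_4)$ then some $c \in \Z_3$ does not occur among $x_1, \dots, x_4$; setting $y = c$ gives $x_i \neq y$ for every $i$, so all four $\alin{2}$ constraints are satisfied. For soundness, suppose the assignment falsifies $\fournat$, so all three elements of $\Z_3$ appear among the four inputs; as there are only four inputs, the multiset of values is $\{a,a,b,c\}$ with $\{a,b,c\} = \Z_3$. For any choice of $y \in \Z_3$ the number of \emph{violated} constraints equals the number of indices $i$ with $x_i = y$, which is $1$ when $y \in \{b,c\}$ and $2$ when $y = a$; in particular it is at least $1$ for every $y$ (since $y$, like every element of $\Z_3$, occurs) and exactly $1$ for the two best choices $y \in \{b,c\}$. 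Hence the optimal labeling of $y$ satisfies exactly $3$ of the $4$ constraints, i.e., exactly a $\frac34$-fraction, and this holds for every assignment falsifying $\fournat$. Together these verify that the construction is a $\frac34$-gadget-reduction from $\fournat$ to $\alin{2}$.

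The construction is essentially forced and the argument short; the only place needing care is the soundness computation, and even there the ``obstacle'' is merely to confirm that the fraction is \emph{exactly} $3/4$ uniformly over all falsifying assignments — that is, to check both that some value of $y$ always attains $3$ satisfied constraints (take $y$ to be one of the two values appearing only once) and that no value of $y$ can attain all $4$ (every value of $y$ coincides with some input). One should also double-check that the right-hand-side bookkeeping when reinstating the shifts $k_i$ genuinely keeps every produced constraint in the allowed form $v_i - y \neq a \pmod 3$, which it does.
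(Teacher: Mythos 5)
Your construction is the same as the paper's: one fresh auxiliary variable per $\fournat$ constraint representing the ``missing'' element of $\Z_3$, and the four $\alin{2}$ inequalities $v_i + k_i \neq y \pmod 3$ (equivalently $v_i - y \neq -k_i$), each with equal weight. Your completeness and soundness arguments match the paper's; you merely spell out the soundness case analysis (multiset $\{a,a,b,c\}$, best $y$ violates exactly one constraint) more explicitly where the paper says ``it is easy to see,'' so this is correct and essentially identical in approach.
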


\begin{lemma}\label{lem:alin-twotoone}
There is a $1/2$-gadget-reduction from $\alin{2}$ to $2$-to-$1$.
\end{lemma}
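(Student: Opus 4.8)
The plan is to give the natural ``constraint--variable'' gadget at the level of a single constraint, placing the $\alin{2}$ variables on the $U$-side of $2$-to-$1$ Label Cover (label set $[3] \cong \Z_3$) and introducing one fresh auxiliary variable on the $V$-side (label set $[6]$) per constraint. So fix an $\alin{2}$ constraint $C$ of the form ``$v_i - v_j \neq a \pmod 3$'' and let $S_a = \{(x,y) \in \Z_3^2 : x - y \neq a\}$ be its set of $6$ satisfying assignments; identify $S_a$ with $[6]$. The gadget for $C$ consists of a single new variable $w_C$ on the $V$-side together with two edges, each of weight $1/2$: the edge $(v_i, w_C)$ carrying the $2$-to-$1$ map $\sigma_1 : S_a \to \Z_3$, $(x,y) \mapsto x$, and the edge $(v_j, w_C)$ carrying $\sigma_2 : S_a \to \Z_3$, $(x,y) \mapsto y$. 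The first thing to check is that these really are $2$-to-$1$ projections $[6] \to [3]$: for fixed $x_0$ the fibre $\sigma_1^{-1}(x_0) = \{(x_0,y) : y \neq x_0 - a\}$ has size exactly $2$, and symmetrically $|\sigma_2^{-1}(y_0)| = |\{(x,y_0) : x \neq y_0 + a\}| = 2$.

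For completeness, if an assignment $f$ satisfies $C$ then $(f(v_i), f(v_j)) \in S_a$, so setting $f(w_C) = (f(v_i), f(v_j))$ makes $\sigma_1(f(w_C)) = f(v_i)$ and $\sigma_2(f(w_C)) = f(v_j)$, satisfying both gadget edges. For soundness, suppose $f$ does not satisfy $C$, i.e.\ $f(v_i) - f(v_j) = a$. Then $(f(v_i), f(v_j)) \notin S_a$, so no label of $w_C$ can project correctly onto both $v_i$ and $v_j$ simultaneously; hence over all labelings of the auxiliary variable at most one of the two edges is satisfied, giving optimum $\le 1/2$. Conversely, pick any $y$ with $y \neq f(v_j)$; since $f(v_i) - f(v_j) = a$, this is the same as $f(v_i) - y \neq a$, so $(f(v_i), y) \in S_a$ is a valid label of $w_C$, and setting $f(w_C) = (f(v_i), y)$ satisfies $(v_i, w_C)$ but not $(v_j, w_C)$. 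Thus the optimum over labelings of $w_C$ is exactly $1/2$, so $\gadg = 1/2$ as required. Since each constraint gets its own disjoint auxiliary variable $w_C$, assembling these gadgets (with the per-constraint edge weights normalized to match the original constraint weights) yields the desired $1/2$-gadget-reduction from $\alin{2}$ to $2$-to-$1$.

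This argument is essentially routine, so I do not expect a real obstacle; the only points that need a little care are verifying that the two coordinate-projection maps are genuinely $2$-to-$1$ (this uses the specific ``$v_i - v_j \neq a$'' form of $\alin{2}$ constraints and would fail for, say, equality constraints, where the analogous projection would not be onto), and confirming that in the unsatisfied case one can always achieve --- but never exceed --- exactly one of the two edges, so that the soundness fraction is exactly $\gadg$ rather than merely at most $\gadg$.
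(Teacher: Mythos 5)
Your proof is correct and is essentially identical to the paper's: the auxiliary $V$-side variable labeled by satisfying pairs $(x,y)\in S_a$ is exactly the paper's $y_C$ labeled by satisfying functions $g:\{v_1,v_2\}\to\Z_3$, and the two coordinate projections are the same 2-to-1 maps. The only difference is that you explicitly verify the maps are 2-to-1 and that the soundness value is exactly (not merely at most) $1/2$, both of which the paper leaves to the reader.
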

\noindent
Together with Proposition~\ref{prop:gadgets}, these imply the following corollary:
\begin{corollary}\label{cor:fournat-twotoone}
There is a $7/8$-gadget-reduction from $\fournat$ to $2$-to-$1$.  Thus, if it is $\NP$-hard to $(c, s)$-decide the $\fournat$ problem, then it is $\NP$-hard to $((7+c)/8, (7+s)/8)$-decide the 2-to-1 Label Cover problem.
\end{corollary}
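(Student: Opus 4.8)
The plan is to obtain the $7/8$-gadget-reduction by \emph{composing} the gadget reductions of Lemma~\ref{lem:fournat-alin} (from $\fournat$ to $\alin{2}$, with parameter $3/4$) and Lemma~\ref{lem:alin-twotoone} (from $\alin{2}$ to $2$-to-$1$, with parameter $1/2$). So the real content is a general composition fact: if there is a $\gadg_1$-gadget-reduction from $\csp_1$ to $\csp_2$ and a $\gadg_2$-gadget-reduction from $\csp_2$ to $\csp_3$, then there is a $\big(\gadg_1 + (1-\gadg_1)\gadg_2\big)$-gadget-reduction from $\csp_1$ to $\csp_3$. Granting this, with $\gadg_1 = 3/4$ and $\gadg_2 = 1/2$ we get $\gadg_1 + (1-\gadg_1)\gadg_2 = \frac34 + \frac14 \cdot \frac12 = \frac78$, and the $\NP$-hardness claim follows by feeding this into Proposition~\ref{prop:gadgets} and simplifying $c + (1-c)\frac78 = (7+c)/8$ (and likewise $s + (1-s)\frac78 = (7+s)/8$).

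To construct the composed gadget, I would take a $\csp_1$ constraint $C$, run the first gadget to produce a weighted list of $\csp_2$ constraints $\{C_2^{(j)}\}_j$ with weights $w_j$ summing to $1$ over a set $A_1$ of fresh auxiliary variables, and then run the second gadget on each $C_2^{(j)}$ to produce a weighted list of $\csp_3$ constraints over $A_1$ together with a further set $A_2^{(j)}$ of fresh auxiliary variables (disjoint across $j$). Weighting each such $\csp_3$ constraint by $w_j$ times its within-second-gadget weight keeps the total weight $1$, and the auxiliary variables $A_1 \cup \bigcup_j A_2^{(j)}$ are fresh per $\csp_1$ constraint, so this is a legitimate gadget reduction. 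Completeness is immediate: given a satisfying assignment to $C$, first label $A_1$ so that every $C_2^{(j)}$ is satisfied (first gadget), then for each $j$ label $A_2^{(j)}$ so that every $\csp_3$ constraint arising from $C_2^{(j)}$ is satisfied (second gadget, applicable since $C_2^{(j)}$ is now satisfied).

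For soundness, suppose $C$ is not satisfied and fix an arbitrary assignment to $C$'s original variables and to $A_1$; let $p$ be the weighted fraction of the $C_2^{(j)}$ it satisfies. Because the $A_2^{(j)}$ are disjoint, I can optimize each block independently: for satisfied $C_2^{(j)}$ the best $A_2^{(j)}$ satisfies all of its $\csp_3$ constraints, while for unsatisfied $C_2^{(j)}$ it satisfies exactly a $\gadg_2$ fraction, so the best extension satisfies a $p + (1-p)\gadg_2$ fraction overall. Since $C$ is unsatisfied, the first-gadget guarantee gives $p \le \gadg_1$; and since $t \mapsto t + (1-t)\gadg_2$ is increasing (using $\gadg_2 < 1$), the maximum over all assignments is $\le \gadg_1 + (1-\gadg_1)\gadg_2$. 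It is attained by choosing the $A_1$-labeling achieving $p = \gadg_1$ and then the optimal $A_2^{(j)}$ for each $j$, so the maximum is exactly $\gadg_1 + (1-\gadg_1)\gadg_2$, as required.

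I expect the only delicate point to be this soundness bookkeeping: one must use that the second-gadget auxiliary variables of different $\csp_2$ constraints do not interact (so the per-block optima add up) and that the monotonicity of $t \mapsto t + (1-t)\gadg_2$ lets us replace the worst $\csp_2$-assignment by the first gadget's worst-case value $p = \gadg_1$ rather than quantifying over all $\csp_2$-assignments directly. Everything else --- normalization of weights, the two-step completeness argument, and the arithmetic $(c, s) \mapsto ((7+c)/8, (7+s)/8)$ --- is routine.
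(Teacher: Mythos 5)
Your proof is correct and takes essentially the same route as the paper: compose the gadgets from Lemmas~\ref{lem:fournat-alin} and~\ref{lem:alin-twotoone} and apply Proposition~\ref{prop:gadgets}. The paper leaves the gadget-composition step implicit (it amounts to applying Proposition~\ref{prop:gadgets} twice and doing the arithmetic), while you correctly supply the soundness bookkeeping --- the per-block disjointness of the second-level auxiliary variables and the monotonicity of $t \mapsto t + (1-t)\gadg_2$.
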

\noindent
The gadget reduction from $\fournat$ to $\alin{2}$ relies on the simple fact that if $a, b, c, d \in \Z_3$ satisfy the $\fournat$ predicate, then there is some element of $\Z_3$ that none of them equal. 

\begin{proof}[Proof of Lemma~\ref{lem:fournat-alin}]
A $\fournat$ constraint $C$ on the variables $S = (v_1, v_2, v_3, v_4)$ is of the form
\begin{equation*}
\fournat(v_1 + k_1, v_2 + k_2, v_3 + k_3, v_4 + k_4),
\end{equation*}
where the $k_i$'s are all constants in $\Z_3$.  To create the $\alin{2}$ instance, introduce the auxiliary variable $y_C$ and add the four $\alin{2}$ equations
\begin{equation}
\label{eqn:fournat-alin}
v_i + k_i \neq y_C \pmod{3}, \quad i \in [4].
\end{equation}

If $f:S \rightarrow \Z_3$ is an assignment which satisfies the $\fournat$ constraint, then there is some $a \in \Z_3$ such that $f(v_i) + k_i \neq a \pmod{3}$ for all $i \in [4]$.  Assigning $a$ to $y_C$ satisfies all four equations \eqref{eqn:fournat-alin}.  On the other hand, if $f$ doesn't satisfy the $\fournat$ constraint, then $\{f(v_i) + k_i\}_{i \in [4]} = \Z_3$, so no assignment to $y_C$ satisfies all four equations.  However, it is easy to see that there is an assignment which satisfies three of the equations.  This gives a $\frac{3}{4}$-gadget-reduction from $\fournat$ to $\alin{2}$, which proves the lemma.
\end{proof}

The reduction from $\alin{2}$ to 2-to-1 Label Cover is the well-known constraint-variable reduction, and uses the fact that in the equation $v_i -  v_j \neq a \pmod{3}$, for any assignment to $v_j$ there are two valid assignments to $v_i$, and vice versa.

\begin{proof}[Proof of Lemma~\ref{lem:alin-twotoone}]
An $\alin{2}$ constraint $C$ on the variables $S = (v_1, v_2)$ is of the form
\begin{equation*}
v_1 - v_2 \neq a \pmod{3},
\end{equation*}
for some $a \in \Z_3$.  To create the 2-to-1 Label Cover instance, introduce the variable $y_C$ which will be labeled by one of the six possible functions $g:S\rightarrow \Z_3$ which satisfies $C$.  Finally, introduce the 2-to-1 constraints $y_C(v_1) = f(v_1)$ and $y_C(v_2) = f(v_2)$.

If $f : S \rightarrow \Z_3$ is an assignment which satisfies the $\alin{2}$ constraint, then we label $y_C$ with $f$.  In this case,
\begin{equation*}
y_C(v_i) = f(v_i),\quad i = 1, 2.
\end{equation*}
Thus, both equations are satisfied.  On the other hand, if $f$ does not satisfy the $\alin{2}$ constraint, then any $g$ which $y_C$ is labeled with disagrees with $f$ on at least one of $v_1$ or $v_2$.  It is easy to see, though, that a $g$ can be selected to satisfy one of the two equations.  This gives a $\frac{1}{2}$-gadget-reduction from $\alin{2}$ to 2-to-1, which proves the lemma.
\end{proof}

\subsection{A pair of tests}

Now that we have shown that $\alin{2}$ hardness results translate into 2-to-1 Label Cover hardness results, we present our $\alin{2}$ function test.  Even though we don't directly use it, it helps explain how we were led to consider the $\fournat$ CSP.  Furthermore, the Fourier analysis that we eventually use for the $\fournat$ Test could instead be performed directly on the $\alin{2}$ Test without any direct reference to the $\fournat$ predicate.  The test is:

\begin{center}
\framebox{$\alin{2}$ Test}
\end{center}
\qquad Given folded functions $f : \Z_3^{K} \rightarrow \Z_3$, $g, h:\Z_3^{dK} \rightarrow \Z_3$:
\begin{itemize}

\item Let $\x \in \Z_3^K$ and $\y \in \Z_3^{dK}$ be independent and uniformly random.
\item For each $i \in [K], j \in [d]$, select $(\z[i])_j$ independently and uniformly from the elements of $\Z_3\setminus\{\x_i, (\y[i])_j\}$.
\item With probability $\frac14$, test $f(\x) \neq h(\z)$; with probability $\frac34$, test $g(\y) \neq h(\z)$.
\end{itemize}

\myfig{.75}{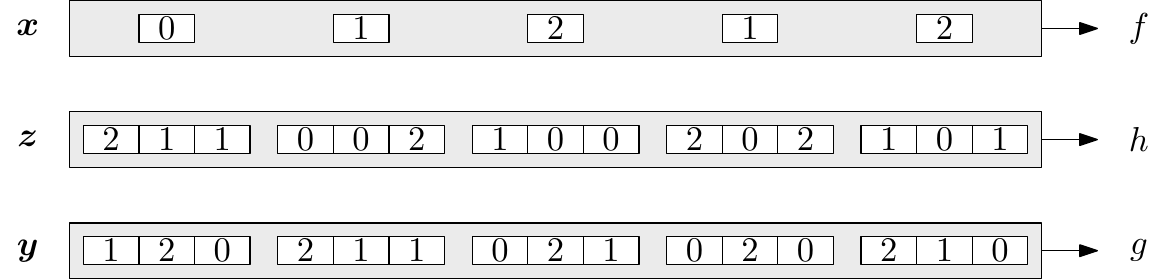}{An illustration of the $\alin{2}$ test distribution; $d = 3$, $K = 5$}{fig:test}

Above is an illustration of the test.  We remark that for any given block $i$, $z[i]$ determines $x_i$ (with very high probability), because as soon as $z[i]$ contains two distinct elements of $\Z_3$, $x_i$ must be the third element of $\Z_3$.  Notice also that in every column of indices, the input to $h$ always differs from the inputs to both $f$ and $g$.  Thus, ``matching dictator'' assignments pass the test with probability~$1$.  (This is the case in which $f(x) = x_i$ and $g(y) = (y[i])_j$ for some $i \in [K]$, $j \in [d]$.) On the other hand, if $f$ and $g$ are ``nonmatching dictators'', then they succeed with only $\frac{11}{12}$ probability.  This turns out to be essentially optimal among functions $f$ and $g$ without ``matching influential coordinates/blocks''.  We will obtain the following theorem:
\begin{named}{Theorem \ref{thm:alin-hardness} restated}
For all $\eps > 0$, it is $\NP$-hard to $(1, \frac{11}{12} + \eps)$-decide the $\alin{2}$ problem.
\end{named}

Before proving this, let us further discuss the $\alin{2}$ test.  Given $\x$, $\y$, and $\z$ from the $\alin{2}$ test, consider the following method of generating two additional strings $\y', \y'' \in \Z_3^{dK}$ which represent $h$'s ``uncertainty'' about $\y$.  For $j \in [d]$, if $\x_i = (\y[i])_j$, then set both $(\y'[i])_j$ and $(\y''[i])_j$ to the lone element of $\Z_3 \setminus \{\x_i, (\z[i])_j\}$.  Otherwise, set one of $(\y'[i])_j$ or $(\y''[i])_j$ to $\x_i$, and the other one to $(\y[i])_j$.  It can be checked that $\buddy(\x_i, (\y[i])_j, (\y'[i])_j, (\y''[i])_j) =  1$, a more stringent requirement than satisfying $\fournat$.  In fact, the marginal distribution on these four variables is a uniformly random assignment that satisfies the $\buddy$ predicate.  Conditioned on $\x$ and $\z$, the distribution on $\y'$ and $\y''$ is identical to the distribution on $\y$.  To see this, first note that by construction, neither $(\y'[i])_j$ nor $(\y''[i])_j$ ever equals $(\z[i])_j$.  Further, because these indices are distributed as uniformly random satisfying assignments to $\buddy$, $\Pr[(\y'[i])_j = x_i] = \Pr[(\y''[i])_j = x_i] = \frac13$, which matches the corresponding probability for $\y$. Thus, as $\y$, $\y'$, and $\y''$ are distributed identically, we may rewrite the test's success probability as:
\begin{align*}
\Pr[\text{$f$, $g$, and $h$ pass the test}]
& = \tfrac14\Pr[f(\x) \neq h(\z)] + \tfrac34\Pr[g(\y) \neq h(\z)]\\
& = \text{avg}\left\{
			\begin{array}{r}
				\Pr[f(\x) \neq h(\z) ], \\
				\Pr[g(\y) \neq h(\z) ], \\
				\Pr[g(\y') \neq h(\z) ], \\
				\Pr[g(\y'') \neq h(\z) ]\phantom{,}
			\end{array} \right\}\\
&\leq \frac34 + \frac14 \E[\fournat(f(\x), g(\y), g(\y'), g(\y''))].
\end{align*}
This is because if $\fournat$ fails to hold on the tuple $(f(\x), g(\y), g(\y'), g(\y''))$, then $h(\z)$ can disagree with at most~$3$ of them.

At this point, we have removed $h$ from the test analysis and have uncovered what appears to be a hidden $\fournat$ test inside the $\alin{2}$ Test: simply generate four strings $\x$, $\y$, $\y'$, and $\y''$ as described earlier, and test $\fournat(f(\x), g(\y), g(\y'), g(\y''))$.  With some renaming of variables, this is exactly what our $\fournat$ Test does:

\begin{center}
\framebox{$\fournat$ Test}
\end{center}
\qquad Given folded functions $f : \Z_3^{K} \rightarrow \Z_3$, $g:\Z_3^{dK} \rightarrow \Z_3$:
\begin{itemize}
\item Let $\x \in \Z_3^K$ be uniformly random.
\item Select $\y, \z, \w$ as follows: for each $i \in [K], j \in [d]$, select $((\y[i])_j, (\z[i])_j, (\w[i])_j)$ uniformly at random from the elements of $\Z_3$ satisfying $\buddy(\x_i, (\y[i])_j,  (\z[i])_j, (\w[i])_j)$.
\item Test $\fournat(f(\x), g(\y), g(\z), g(\w))$.
\end{itemize}

\myfig{.75}{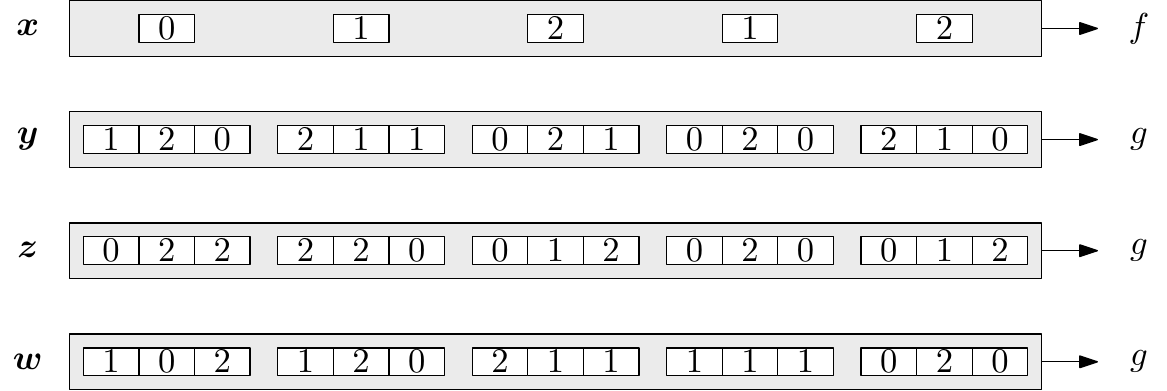}{An illustration of the $\fournat$ test distribution; $d = 3$, $K = 5$}{fig:nat-test}

Above is an illustration of this test.  In this illustration, the strings $\z$ and $\w$ were derived from the strings in Figure~\ref{fig:test} using the process detailed above for generating $\y'$ and $\y''$.  Note that each column is missing one of the elements of $\Z_3$, and that each column satisfies the $\buddy$ predicate.  Because satisfying $\buddy$ implies satisfying $\fournat$, matching dictators pass this test with probability~$1$.  On the other hand, it can be seen that nonmatching dictators pass the test with probability $\frac23$.  In the next section we show that this is optimal among functions $f$ and $g$ without ``matching influential coordinates/blocks''.

(As one additional remark, our $\alin{2}$ Test is basically the composition of the $\fournat$ Test with the gadget from Lemma~\ref{lem:fournat-alin}.  In this test, if we instead performed the $f(\x) \neq h(\z)$ test with probability $\frac13$ and the $g(\y) \neq h(\z)$ test with probability $\frac23$, then the resulting test would basically be the composition of a $\alin{3}$ test with a suitable $\alin{3}$-to-$\alin{2}$ gadget.)

\subsection{Analysis of $\fournat$ Test}\label{sec:analysis}

Let $\omega = e^{2\pi i/3}$, and set $\roots_3 = \{\omega^0, \omega^1, \omega^2\}$.  In what follows, we identify $f$ and $g$ with the functions $\omega^f$ and $\omega^g$, respectively, whose range is $\roots_3$ rather than $\Z_3$.  Set $L = dK$.  The remainder of this section is devoted to the proof of the following lemma:
\begin{lemma}\label{lem:big-fourier}
Let $f:\Z_3^K\rightarrow \roots_3$ and $g:\Z_3^{dK}\rightarrow \roots_3$.  Then
\begin{equation*}
\E[\fournat(f(\x), g(\y), g(\z), g(\w))]
\leq  \tfrac{2}{3} + \tfrac{2}{3}\sum_{\alpha \in \Z_3^L}\vert\hat{f}(\pih(\alpha))\vert\cdot \vert\hat{g}(\alpha)\vert^2\cdot(1/2)^{\#\alpha}
\end{equation*}
\end{lemma}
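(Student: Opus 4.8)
The plan is to expand $\fournat$ as a multilinear combination of $\Z_3$-characters, substitute the Fourier expansions of $f$ and $g$, and then use the product structure of the test distribution (which factors over the $L=dK$ columns $(i,j)$) to reduce everything to a local computation in a single column. First I would write an exact arithmetic formula for the indicator $\fournat(a,b,c,d)$ in terms of $\omega^{?}$-powers of the four inputs. A convenient route is: $\fournat(a,b,c,d) = 1$ iff some value of $\Z_3$ is missing, and by inclusion–exclusion over the three values $r \in \Z_3$, $\fournat(a,b,c,d) = \sum_{r} \prod_{s\in\{a,b,c,d\}} \mathbf{1}[s\neq r] - (\text{corrections for two missing values})$; since four inputs cannot miss two distinct values in a way that over-counts badly, one gets a clean closed form. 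Expanding each $\mathbf{1}[s \neq r] = \tfrac13(2 - \omega^{s-r} - \omega^{2(s-r)})$ (using $\mathbf{1}[t=0] = \tfrac13\sum_{k} \omega^{kt}$) turns $\E[\fournat(f(\x),g(\y),g(\z),g(\w))]$ into a sum over tuples of exponents $(a,b,c,d)\in\Z_3^4$ of coefficients times $\E[f(\x)^{a} g(\y)^{b} g(\z)^{c} g(\w)^{d}]$, where I abuse notation writing $f(\x)^a$ for $\omega^{a f(\x)}$.

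Next I would insert $f = \sum_{\beta} \hat f(\beta)\chi_\beta$ and $g = \sum_{\alpha^{(1)}},\sum_{\alpha^{(2)}},\sum_{\alpha^{(3)}}$ for the three copies $g(\y),g(\z),g(\w)$, and use the fact that the test generates the columns $(\x_i,(\y[i])_j,(\z[i])_j,(\w[i])_j)$ independently across $(i,j)$, each uniform on satisfying assignments of $\buddy$. So the expectation factors as a product over $(i,j)$ of a single-column moment $M_{a}\big(b_{ij},c_{ij},d_{ij}\big)$ where the exponents are $a = a \cdot$ (whatever power comes from $\beta$ at block $i$, i.e.\ it will couple to $|\beta[i]|$ only after we sum) — more precisely, after summing out $\x$ one forces $|\beta[i]| \cdot a + \sum_j (\text{column exponents})$-type constraints. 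The folding of $f$ and $g$ and the structure of $\buddy$ (two distinct values each twice, and $\x_i$ being one of those values) should collapse this so that only $\pih(\alpha)$ — the vector of block-sums $|\alpha[i]| \bmod 3$ — of the single relevant $g$-character index $\alpha$ survives, matched against $\hat f(\pih(\alpha))$, and the three $g$-indices $\alpha^{(1)},\alpha^{(2)},\alpha^{(3)}$ are forced to a common $\alpha$ (giving the $|\hat g(\alpha)|^2$; the ``third'' copy's contribution is absorbed because on a $\buddy$ column, knowing $\x_i$ and two of the three symbols determines the third). The per-column factor for a coordinate where $\alpha_{(i)_j}\neq 0$ will work out to have magnitude exactly $\tfrac12$, producing the $(1/2)^{\#\alpha}$; coordinates with $\alpha_{(i)_j}=0$ contribute $1$. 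Finally I bound $|\hat f|,|\hat g|^2$ by absolute values, use $|\hat f(\pih(\alpha))|\le 1$, and collect the ``$\alpha=0$'' (or more precisely the ``trivial character'') term, which should contribute exactly the constant $\tfrac23$ (matching the $\tfrac23$ passing probability of nonmatching dictators — or rather of constant/low-degree garbage), with the remaining terms bounded by $\tfrac23 \sum_\alpha |\hat f(\pih(\alpha))|\,|\hat g(\alpha)|^2 (1/2)^{\#\alpha}$.

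The main obstacle I anticipate is the bookkeeping in the single-column moment computation: correctly identifying which exponent tuples $(a,b,c,d)$ give nonzero contribution over a uniform satisfying assignment of $\buddy$ with $\x_i$ pinned to one of the doubled symbols, and checking that the surviving terms assemble into exactly the stated bound with the exact constant $\tfrac23$ and exact factor $(1/2)^{\#\alpha}$ (as opposed to some $O(\cdot)$ version). This requires (a) getting the inclusion–exclusion formula for $\fournat$ exactly right, (b) tracking the constraint relating the $f$-character $\beta$ to the $g$-characters — one expects $\beta = \pih(\alpha)$ to emerge because summing over $\x_i$ forces $a\,|\beta[i]|$ to cancel against $a\sum_j$ of the column exponents, and on $\buddy$ columns those column exponents relate additively to $|\alpha[i]|$ — and (c) verifying the three $g$-copies collapse to one $\alpha$. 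I would handle (a)–(c) by a careful but elementary case analysis on a single $\buddy$ column (there are only $\binom{3}{2}\cdot\binom{4}{2} = 18$ such column patterns up to the pinning, and by symmetry far fewer essentially distinct cases), then multiply across columns and sum. The triangle inequality at the end is where the inequality (rather than equality) enters, together with dropping the constraint $\beta = \pih(\alpha)$ in favor of just summing over $\alpha$ with weight $|\hat f(\pih(\alpha))|$.
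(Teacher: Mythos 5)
Your high-level plan is aligned with the paper's in its broad strokes --- arithmetize $\fournat$ in terms of $\omega$-powers, expand $f$ and the three $g$-copies in Fourier, exploit the columnwise product structure of the test distribution, extract a per-column factor of magnitude $\tfrac12$ per nonzero coordinate, and finish with the triangle inequality.  However, there is a genuine gap in the middle: you implicitly assume that every term in the expansion can be brought to the form $\hat f(\pih(\alpha))\,\hat g(\alpha)^2\,(\pm\tfrac12)^{\#\alpha}$, and that the constant $\tfrac23$ is ``the trivial-character contribution.''  Neither of these is correct, and the discrepancy is not just bookkeeping.

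Concretely, the Fourier support of $\fournat$ (as a function on $\Z_3^4$) consists of the zero frequency with weight $\tfrac59$, the twelve frequencies with one $+1$ and one $-1$ with weight $\tfrac19$, and the eight frequencies with three $+1$'s or three $-1$'s with weight $-\tfrac19$.  After symmetrizing over $\y,\z,\w$, the test expectation splits into exactly five kinds of terms:
\begin{equation*}
\E[\fournat(\cdots)] \;=\; \tfrac59 + \tfrac23\,\real\E[f(\x)\overline{g(\y)}] + \tfrac23\,\real\E[g(\y)\overline{g(\z)}] - \tfrac23\,\real\E[f(\x)g(\y)g(\z)] - \tfrac29\,\real\E[g(\y)g(\z)g(\w)].
\end{equation*}
Only the fourth term ever acquires the shape $\hat f(\pih(\alpha))\hat g(\alpha)^2(-\tfrac12)^{\#\alpha}$; it is the only one in which $f$ appears together with exactly two $g$-copies.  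The last term $\E[g(\y)g(\z)g(\w)]$ contains no $f$ at all, so it cannot possibly contribute a $\hat f(\pih(\alpha))$ factor, and a per-column calculation shows the three $g$-character indices are \emph{not} forced to coincide (e.g.\ the per-column exponent pattern $(1,1,2)$ survives under the $\buddy$ distribution), so the three $\hat g$'s do not even collapse to $|\hat g(\alpha)|^2\hat g(\alpha)$.  The paper's way around this is to abandon exact computation of that term entirely and use the crude pointwise bound $\real\E[g(\y)g(\z)g(\w)] \ge -\tfrac12$ (since the product always lies in $\roots_3$).  This crude bound is what produces the extra $\tfrac29\cdot\tfrac12=\tfrac19$ that, added to the $\tfrac59$ constant, yields $\tfrac23$ --- not any ``trivial character,'' which in fact contributes $0$ because $f$ and $g$ are folded ($\hat f(0)=\hat g(0)=0$).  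And the bound is tight: a single dictator $g$ makes $\E[g(\y)g(\z)g(\w)]$ exactly $-\tfrac12$, so you cannot hope to argue this term away.  Finally, the two pair terms must vanish for the bound to close; $\E[f(\x)\overline{g(\y)}]=0$ is immediate from independence and folding, but $\E[g(\y)\overline{g(\z)}]=0$ needs a separate argument (the paper gives a short combinatorial one using the measure-preserving shift $(x,z)\mapsto(x+1,z-1)$), and your sketch does not address it.  Until you separate out these three classes of terms and handle the $g$-only cube term by the crude bound rather than by Fourier collapse, the plan does not produce the stated inequality.
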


The first step is to ``arithmetize'' the $\fournat$ predicate.  It is not hard to verify that
\begin{align*}
  \fournat(a_1, a_2, a_3, a_4) &= \frac{5}{9} + \frac{1}{9} \sum_{i \ne j} \omega^{a_i} \overline{\omega}^{a_j} - \frac{1}{9} \sum_{i < j < k} \omega^{a_i} \omega^{a_j} \omega^{a_k} - \frac{1}{9} \sum_{i < j < k} \overline{\omega}^{a_i}\overline{\omega}^{a_j}\overline{\omega}^{a_k} \\
&= \frac59 + \frac{2}{9}\sum_{i<j}\Re[\omega^{a_i}\overline{\omega}^{a_j}] - \frac{2}{9}\sum_{i<j<k} \Re[\omega^{a_i}\omega^{a_j}\omega^{a_k}].
\end{align*}
Using the symmetry between $\y$, $\z$, and $\w$, we deduce
\begin{multline}
\E[\fournat(f(\x), g(\y), g(\z), g(\w))] \\= \tfrac59 + \tfrac23 \Re \E[f(\bx)\overline{g(\y)}] + \tfrac23 \Re \E[g(\by)\overline{g(\bz)}] - \tfrac23 \Re \E[f(\bx)g(\by)g(\bz)] - \tfrac29 \Re \E[g(\by)g(\bz)g(\bw)]. \label{eq:bigexpansion}
\end{multline}
In the second term in the RHS of~\eqref{eq:bigexpansion} we in fact have $\E[f(\bx)\overline{g(\y)}] = 0$.  This is because $\bx$ and $\by$ are independent, and hence $\E[f(\bx)\overline{g(\y)}] = \E[f(\bx)]\E[\overline{g(\y)}] = 0 \cdot 0$ since $f$ and $g$ are folded.  Regarding the third term of the RHS in~\eqref{eq:bigexpansion}, this also turns out to be~$0$ by virtue of $g$ being folded.  This can be proven using a Fourier-analytic argument; we present here an alternate combinatorial argument:
\begin{lemma}
$\E[g(\y)\overline{g(\z)}] = 0$.
\end{lemma}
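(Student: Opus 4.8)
The plan is to re-describe the joint distribution of $(\by,\bz)$ in a form where $\by$ appears ``freely,'' and then to finish with a single change of variables that exploits the foldedness of $g$.

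First I would unwind how $\by,\bz,\bw$ are generated, one column at a time. Fix a column $(i,j)$ and condition on $\bx_i = a$. Since $\buddy(\bx_i,(\by[i])_j,(\bz[i])_j,(\bw[i])_j) = 1$, exactly two distinct elements of $\Z_3$ occur twice among these four values; one of them is $a$, so exactly one of $(\by[i])_j,(\bz[i])_j,(\bw[i])_j$ equals $a$ and the other two equal a common value $a' \in \Z_3\setminus\{a\}$, the ``position of the $a$'' being uniform in $\{\by,\bz,\bw\}$ and $a'$ uniform in $\Z_3\setminus\{a\}$, independently. Averaging over $\bw$, a one-line check then shows that $((\by[i])_j,(\bz[i])_j)$ is uniform over the six pairs $(b,c)\in\Z_3^2$ with $b+c\not\equiv -a \pmod{3}$, and that these choices are independent across all columns $(i,j)$. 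Consequently the probability of any outcome $(\by,\bz)=(y,z)$ depends only on the coordinatewise sum $y+z$ — concretely, only on the size of the set $\{(y[i])_j+(z[i])_j : j\in[d]\}$ of column sums in each block $i$. Hence, conditioned on $\bu := \by+\bz \in \Z_3^L$, the vector $\by$ is uniform on $\Z_3^L$ and $\bz = \bu - \by$.

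Given that, I would write
\[
\E[g(\by)\overline{g(\bz)}] \;=\; \E_{\bu}\Bigl[\,\E_{\by}\bigl[\,g(\by)\,\overline{g(\bu-\by)}\,\bigr]\,\Bigr],
\]
where for each fixed $\bu$ the inner expectation is over $\by$ uniform on $\Z_3^L$, and then show that the inner expectation always vanishes. Fix $\bu$ and set $I = \E_{\by}[g(\by)\overline{g(\bu-\by)}]$. The shift $\by\mapsto\by+1$ is a measure-preserving bijection of the uniform law on $\Z_3^L$, so $I = \E_{\by}[g(\by+1)\overline{g((\bu-\by)-1)}]$. Foldedness gives $g(\by+1) = \omega\,g(\by)$ and $g((\bu-\by)-1) = \overline{\omega}\,g(\bu-\by)$, hence $\overline{g((\bu-\by)-1)} = \omega\,\overline{g(\bu-\by)}$; multiplying these, $I = \omega^2 I$, and since $\omega^2\neq 1$ this forces $I = 0$. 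Therefore $\E[g(\by)\overline{g(\bz)}] = \E_{\bu}[0] = 0$.

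I expect the main obstacle to be the first part: spotting the ``forbidden-sum'' description of each column (given $\bx_i=a$, the pair $((\by[i])_j,(\bz[i])_j)$ is uniform over pairs whose sum avoids $-a$) and deducing from it that the law of $(\by,\bz)$ depends only on $\by+\bz$, so that $\by$ is conditionally uniform. The folding step is then a routine two-line symmetry, and in fact it shadows the standard Fourier argument: in Fourier terms the only way $\E[\chi_\alpha(\by)\overline{\chi_\beta(\bz)}]$ can be nonzero is $\beta = -\alpha$, which is incompatible with both $\vert\alpha\vert\equiv 1$ and $\vert\beta\vert\equiv 1\pmod{3}$.
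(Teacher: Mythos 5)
Your proof is correct, and it takes a genuinely different (though closely related) route from the paper's. The paper conditions on $\by = y$ and studies the conditional distribution of $(\x,\z)$: it observes that the support is exactly the set of pairs $(x,z)$ with $(x_\pi)_i + y_i + z_i \not\equiv 0\pmod 3$ for all $i$, that this conditional law is uniform on its support, and that the order-$3$ shift $t(x,z)=(x+1,z-1)$ preserves the support; decomposing into $t$-orbits and using foldedness of $g$ at $\z$ alone, it concludes that $g(\z)$ is \emph{uniform} on $\roots_3$ conditioned on $\by=y$, which is slightly stronger than what is needed. You instead condition on the ``sum'' $\bu=\by+\bz$, first establishing that the joint law of $(\by,\bz)$ depends only on $\by+\bz$ (your per-block count $\tfrac{3-\vert\{(y[i])_j+(z[i])_j\}\vert}{3\cdot 6^d}$ checks out), hence $\by$ is conditionally uniform given $\bu$; you then apply the shift $\by\mapsto\by+1$, $\bz=\bu-\by\mapsto\bz-1$ and use foldedness at \emph{both} arguments to get $I=\omega^2 I$, hence $I=0$. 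The two approaches exploit the same underlying shift symmetry but slice the probability space differently. Yours is a bit cleaner at the finish (a one-line fixed-point argument with no orbit bookkeeping) at the cost of first verifying the ``law depends only on $y+z$'' fact, whereas the paper gets away with just characterizing the support set; both are about the same total effort, and both correctly avoid the ``built-in noise'' that a Fourier argument via $\E[\chi_\alpha(\by)\overline{\chi_\beta(\bz)}]$ would also sidestep, as you note at the end.
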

\begin{proof}
Fix any value $y \in \Z_3^L$ for $\y$. Consider the function $t:\Z_3^K \times \Z_3^L \rightarrow \Z_3^K \times \Z_3^L$ defined as $t(x, z) = (x + 1, z-1)$, where all arithmetic is performed modulo 3.  Note that $t$ has order 3, meaning that $t(t(t(x, z))) = (x, z)$.  This allows us to group values for $\x$ and $\z$ into sets of size three as follows: put $(x, z) \in \Z_3^K \times \Z_3^L$ into the set $T(x, z) = \{(x, z), t(x, z), t(t(x, z))\}$.  Because $t$ is invertible and of order 3, each pair $(x, z)$ is a member of only one set: $T(x, z)$.

Conditioned on $\y = y$, if $(x, z)$ is in the support of the test, then all $(x', z') \in T(x, z)$ are also in the support of the test.  This is because the strings which are in the support of the test are exactly the strings $x$ and $z$ for which the set $\{(x_{\pi})_i, y_i, z_i\}\subseteq \Z_3$ is of size 2, for all $i \in [L]$.  These strings, in turn, are exactly those for which $x_{\pi} + y + z \not\equiv 0 \pmod{3}$.  But if $(x', z') = t(x, z)$, then
\begin{equation*}
x'_\pi + y + z' \equiv (x_\pi + 1) + y + (z - 1) \equiv x_\pi + y + z \not\equiv 0 \pmod{3}.
\end{equation*}
This shows that $t(x, z)$ is in the support of the test, conditioned on $\y = y$.  As $T(x', z') = T(x, z)$, the same holds for $t(t(x,z))$.

When conditioned on $\y = y$, each pair $(x, z)$ in the support of the test occurs with equal probability.  To see this, first note that $\x$ is pairwise independent from $\y$.  In other words, any value $x$ for $\x$ is equally likely, regardless of $y$.  Then, conditioned on $\x = x$ and $\y=y$, there are exactly two possibilities for each index of $\z$, both of which occur with half probability.  Thus, the event $(x, z)$ occurs with the same probability, no matter the values of $x$ or $z$.

Consider an arbitrary set $T(x, z)$.  Conditioned on $(\x, \z)$ falling in $T(x, z)$, the value of $(\x, \z)$ is a uniformly random element of this set.  This means that $\z$ is equally likely to be $z$, $z-1$, or $z-2$.  By the folding of $g$, $g(\z)$ is therefore equally likely to be one of $\omega^0, \omega^1$, or $\omega^2$.  As this happens for any choice of the set $T(x, z)$, $g(\z)$ is uniform on $\roots_3$, even when conditioned on $\y=y$.  Thus, $\E[g(\y)\overline{g(\z)}] = 0$ as desired.
\end{proof}

Equation~\eqref{eq:bigexpansion} has now been reduced to
\begin{equation}
\eqref{eq:bigexpansion} = \tfrac{5}{9} - \tfrac{2}{3} \Re \E[f(\x)g(\y)g(\z)] - \tfrac{2}{9}\Re \E[g(\y)g(\z)g(\w)]. \label{eq:smallexpansion}
\end{equation}
As $g(\y)g(\z)g(\w)$ is always in $\roots_3$, $\Re \E[g(\y)g(\z)g(\w)]$ is always at least $-\frac12$.  Therefore,
\begin{equation}
\eqref{eq:smallexpansion} \leq \tfrac{2}{3} - \tfrac{2}{3} \Re \E[f(\x)g(\y)g(\z)]. \label{eq:tinyexpansion}
\end{equation}
It remains to handle the $\E[f(\x)g(\y)g(\z)]$ term, which is the subject of our next lemma.  This is done through a standard argument in the style of \Hastad~\cite{Has01}.
\begin{lemma}
$\E[f(\x)g(\y)g(\z)] = \sum_{\alpha \in \Z_3^L}\hat{f}(\pih(\alpha))\hat{g}(\alpha)^2\left(-\frac{1}{2}\right)^{\#\alpha}$.
\end{lemma}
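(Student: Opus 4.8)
The plan is a routine Håstad-style Fourier computation. First I expand $f=\sum_{\beta\in\Z_3^K}\hat f(\beta)\chi_\beta$ and $g=\sum_{\alpha\in\Z_3^L}\hat g(\alpha)\chi_\alpha$ (recall we have identified $f,g$ with their $\roots_3$-valued versions), so that
\[
\E[f(\x)g(\y)g(\z)]=\sum_{\beta,\alpha,\gamma}\hat f(\beta)\,\hat g(\alpha)\,\hat g(\gamma)\;\E\big[\chi_\beta(\x)\chi_\alpha(\y)\chi_\gamma(\z)\big].
\]
Conditioned on $\x$, the triples $\big((\y[i])_j,(\z[i])_j,(\w[i])_j\big)$ are drawn independently across all $i\in[K]$, $j\in[d]$, and $\chi_\beta(\x)\chi_\alpha(\y)\chi_\gamma(\z)$ is a product over the coordinates $\x_i$ and $(\y[i])_j,(\z[i])_j$; hence the inner expectation factors as $\E_{\x}\big[\prod_{i}\omega^{\beta_i\x_i}\prod_{j}\kappa_{i,j}(\x_i)\big]$, where I set $\kappa_{i,j}(a):=\E\big[\omega^{(\alpha[i])_j\,(\y[i])_j+(\gamma[i])_j\,(\z[i])_j}\mid \x_i=a\big]$.

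The crux is evaluating $\kappa_{i,j}(a)$. Conditioned on $\x_i=a$, the pair $\big((\y[i])_j,(\z[i])_j\big)$ is uniform over the six pairs
\[
(a,a{+}1),\ (a,a{+}2),\ (a{+}1,a),\ (a{+}2,a),\ (a{+}1,a{+}1),\ (a{+}2,a{+}2),
\]
i.e.\ the $(\y,\z)$-marginals of the $\buddy$-satisfying quadruples whose first entry is $a$. Writing $p:=(\alpha[i])_j$, $q:=(\gamma[i])_j$ and pulling out $\omega^{(p+q)a}$ gives
\[
\kappa_{i,j}(a)=\tfrac16\,\omega^{(p+q)a}\Big((\omega^p+\omega^{2p})+(\omega^q+\omega^{2q})+(\omega^{p+q}+\omega^{2(p+q)})\Big).
\]
Since $\omega^t+\omega^{2t}$ equals $2$ when $t\equiv 0\pmod 3$ and $-1$ otherwise, and since in $\Z_3$ the only way to have $p,q\ne 0$ and $p+q\ne 0$ at the same time is $p=q\ne 0$, a short case check yields: $\kappa_{i,j}(a)=0$ unless $p=q$; $\kappa_{i,j}(a)=1$ when $p=q=0$; and $\kappa_{i,j}(a)=-\tfrac12\,\omega^{2pa}=-\tfrac12\,\omega^{-pa}$ when $p=q\ne 0$ (using $2\equiv-1\pmod 3$).

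Finally, only terms with $\gamma=\alpha$ survive. For such a term, block $i$ contributes $\prod_j\kappa_{i,j}(\x_i)=(-\tfrac12)^{\#(\alpha[i])}\,\omega^{-|\alpha[i]|\x_i}$ (the zero coordinates of $\alpha[i]$ give a factor $1$, the nonzero ones a factor $-\tfrac12\,\omega^{-(\alpha[i])_j\x_i}$), so $\prod_{i,j}\kappa_{i,j}(\x_i)=(-\tfrac12)^{\#\alpha}\prod_i\omega^{-|\alpha[i]|\x_i}$. Taking the remaining expectation over the independent uniform coordinates $\x_i$ gives $\E_\x\big[\prod_i\omega^{(\beta_i-|\alpha[i]|)\x_i}\big]=\prod_i\mathbf 1[\beta_i\equiv|\alpha[i]|\bmod 3]=\mathbf 1[\beta=\pih(\alpha)]$. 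Substituting back collapses the triple sum to $\sum_{\alpha\in\Z_3^L}\hat f(\pih(\alpha))\,\hat g(\alpha)^2\,(-\tfrac12)^{\#\alpha}$, which is the claim. The only delicate point is the enumeration of the $\buddy$-marginals and the ensuing character-sum case analysis; everything else is bookkeeping, and—unlike the vanishing of the other terms in \eqref{eq:bigexpansion}—no appeal to folding is needed here.
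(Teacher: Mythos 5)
Your proof is correct and follows essentially the same Håstad-style Fourier computation as the paper: expand in characters, factor over coordinates given $\x$, do a case analysis on the pair $((\alpha[i])_j,(\gamma[i])_j)$ showing the per-coordinate expectation vanishes unless they agree, and collapse the remaining $\x$-expectation to the constraint $\beta=\pih(\alpha)$. The only cosmetic difference is in how the per-coordinate expectation is evaluated — you enumerate the six $\buddy$-marginal pairs and use the identity $\omega^t+\omega^{2t}\in\{2,-1\}$, whereas the paper's casework appeals to the uniformity of $\y_j$, $\z_j$, $\y_j-\z_j$, and the two-valuedness of $\y_j+\z_j$ — and your closing remark is accurate: the paper's initial restriction of the Fourier sum to $|\alpha|\equiv|\beta|\equiv|\gamma|\equiv 1$ comes from folding but is never actually used in the derivation of this identity, so the lemma indeed holds for arbitrary $\roots_3$-valued $f,g$.
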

\begin{proof}
Begin by expanding out $\E[f(\x)g(\y)g(\z)]$:
\begin{equation}
\E[f(\x)g(\y)g(\z)] = \sum_{\substack{
                                               \alpha \in \Z_3^K, \beta, \gamma \in \Z_3^L\\
                                                \vert \alpha \vert \equiv \vert \beta \vert \equiv \vert \gamma \vert \equiv 1 \pmod{3}}}
\hat{f}(\alpha)\hat{g}(\beta)\hat{g}(\gamma)\E[\chi_\alpha(\x)\chi_{\beta}(\y)\chi_\gamma(\z)].\label{eq:fourierexpanded}
\end{equation}
We focus on the products of the Fourier characters:
\begin{equation}
\E[\chi_\alpha(\x)\chi_\beta(\y)\chi_\gamma(\z)]
=  \prod_{i \in [K]}\E[\chi_{\alpha_i}(\x_i) \chi_{\beta[i]}(\y[i])\chi_{\gamma[i]}(\z[i])]\label{eq:fcharacters}
\end{equation}
We can attend to each block separately:
\begin{align}
\E[\chi_{\alpha_i}(\x_i) \chi_{\beta[i]}(\y[i])\chi_{\gamma[i]}(\z[i])]
= &\E\left[\omega^{\alpha_i \cdot \x_i + \beta[i] \cdot \y[i] + \gamma[i] \cdot \z[i]}\right]\nonumber\\
= &\E_{\x}\left[\omega^{\alpha_i \cdot a}\prod_{j:\pi(j) = i}\underbrace{\E_{\y, \z}\left[\omega^{\beta_j \y_j + \gamma_j \z_j} \mid \x_i = a\right]}_{(*)}\right].\label{eq:blockcharacters}
\end{align}

Now, consider the expectation $(*)$.  The distribution on the values for $(\y_j, \z_j)$ is uniform on the six possibilities $(a+1, a+1)$, $(a+2, a+2)$, $(a, a+1)$, $(a, a+2)$, $(a+1, a)$, and $(a+2, a)$.  We claim that $(*)$ is nonzero if and only if $\beta_j \equiv \gamma_j \pmod{3}$.  If, on the other hand, $\beta_j \not\equiv \gamma_j \pmod{3}$, then either only one of $\beta_j$ or $\gamma_j$ is zero, or neither is zero, and $-\beta_j \equiv \gamma_j \pmod{3}$.  In the first case, the expectation is either $\E[\omega^{\beta_j\y_j}\mid \x_i = a]$ or $\E[\omega^{\gamma_j\z_j}\mid \x_i = a]$ for a nonzero $\beta_j$ or a nonzero $\gamma_j$, respectively.  Both of these expectations are zero, as both $\y_j$ and $\z_j$ are uniform on $\Z_3$.  In the second case,
\begin{align*}
\E[\omega^{\beta_j \y_j + \gamma_j \z_j} \mid \x_i = a] = &\E[\omega^{\beta_j\y_j - \beta_j\z_j} \mid \x_i = a]\\
= &\E[\omega^{\beta_j(\y_j - \z_j)} \mid \x_i = a],
\end{align*}
which is zero, because $\beta_j$ is nonzero, and $\y_j - \z_j$ is uniformly distributed on $\Z_3$.

Thus, when $(*)$ and Equation~\eqref{eq:fcharacters} are nonzero, $\beta \equiv \gamma \pmod{3}$.  This means that $(*) = \E[\omega^{\beta_j(\y_j+\z_j)}\mid \x_i = a]$.  When $\beta_j = 0$, this is clearly $1$.  Otherwise, as either $\y_j + \z_j \equiv 2a + 1 \pmod{3}$ or $\y_j + \z_j \equiv 2a+2 \pmod{3}$, each with probability half,  this is equal to
\begin{equation*}
(*) = \frac{1}{2}\left(\omega^{\beta_j(2a+1)} + \omega^{\beta_j(2a+2)}\right)
 = \frac{\omega^{2 a \beta_j}}{2}(\omega^1 + \omega^2) = -\frac{\omega^{2a\beta_j}}{2}.
\end{equation*}
In summary, when $\beta= \gamma$, $(*) = \left(-\frac{1}{2}\right)^{\#\beta_j}\omega^{2a\beta_j}$.

We can now rewrite Equation~\eqref{eq:blockcharacters} as
\begin{equation*}
\eqref{eq:blockcharacters}
=  \E_{\x}\left[\omega^{\alpha_i \cdot a} \prod_{j:\pi(j) = i} \left(-\frac{1}{2}\right)^{\#\beta_j}\omega^{2a\beta_j}\right]
=  \E_{\x}\left[\left(-\frac{1}{2}\right)^{\#\beta[i]}\omega^{(\alpha_i  + 2 \vert \beta[i]\vert)a}\right].
\end{equation*}
Note that the exponent of $\omega$, $(\alpha_i +2\vert \beta[i]\vert) a$, is zero if $\alpha_i \equiv \vert \beta[i]\vert \pmod{3}$, in which case the expectation is just the constant $(-1/2)^{\#\beta[i]}$.  This occurs for all $i \in [K]$ exactly when $\alpha = \pih(\beta)$.  If, on the other hand, $\alpha_i + 2\vert\beta[i]\vert$ is nonzero, then the entire expectation is zero because $a$, the value of $\x_i$, is uniformly random from $\Z_3$.  Thus, Equation \eqref{eq:fcharacters} is nonzero only when $\alpha = \pih(\beta)$ and $\beta = \gamma$, in which case it equals
\begin{equation*}
\eqref{eq:fcharacters} = \left(-\frac{1}{2}\right)^{\#\beta}.
\end{equation*}
We may therefore conclude with
\begin{equation*}
\eqref{eq:fourierexpanded} = \sum_{\alpha \in \Z_3^L}\hat{f}(\pih(\alpha))\hat{g}(\alpha)^2\left(-\frac{1}{2}\right)^{\#\alpha}. \qedhere
\end{equation*}
\end{proof}

Substituting this result into~\eqref{eq:tinyexpansion} yields
\begin{align*}
\E[\fournat(f(\x), g(\y), g(\z), g(\w))]
&\leq \tfrac{2}{3} - \tfrac{2}{3} \Re \sum_{\alpha \in \Z_3^L}\hat{f}(\pih(\alpha))\hat{g}(\alpha)^2\left(-\frac{1}{2}\right)^{\#\alpha}\\ &\leq \tfrac{2}{3} + \tfrac{2}{3}\sum_{\alpha \in \Z_3^L}\vert\hat{f}(\pih(\alpha))\vert\cdot \vert\hat{g}(\alpha)\vert^2\cdot(1/2)^{\#\alpha},
\end{align*}
completing the proof of Lemma~\ref{lem:big-fourier}.


\section{Hardness of $\fournat$}\label{app:fournat}

In this section, we show the following theorem:
\begin{named}{Theorem \ref{thm:fournat-hardness} (detailed)}
For all $\eps > 0$, it is $\NP$-hard to $(1, \frac23 + \eps)$-decide the $\fournat$ problem.  In fact, in the ``yes case'', all $\fournat$ constraints can be satisfied by $\twopair$ assignments.
\end{named}
Combining this with Lemma~\ref{lem:fournat-alin} yields Theorem~\ref{thm:alin-hardness}, and combining this with Corollary~\ref{cor:fournat-twotoone} yields Theorem~\ref{thm:twotoone-hardness}.  It is not clear whether this gives optimal hardness assuming perfect completeness.  The $\fournat$ predicate is satisfied by a uniformly random input with probability $\frac59$, and by the method of conditional expectation this gives a deterministic algorithm which $(1, \frac59)$-approximates the $\fournat$ CSP.  This leaves a gap of $\frac19$ in the soundness, and to our knowledge there are no better known algorithms.

On the hardness side, consider a uniformly random satisfying assignment to the $\buddy$ predicate.  It is easy to see that each of the four variables is assigned a uniformly random value from $\Z_3$, and also that the variables are pairwise independent.  As any satisfying assignment to the $\buddy$ predicate also satisfies the $\fournat$ predicate, the work of Austrin and Mossel~\cite{AM09} immediately implies that $(1-\eps, \frac59+\eps)$-approximating the $\fournat$ problem is $\NP$-hard under the Unique Games conjecture.  Thus, if we are willing to sacrifice a small amount in the completeness, we can improve the soundness parameter in Theorem~\ref{thm:fournat-hardness}.  Whether we can improve upon the soundness without sacrificing perfect completeness is open.

We now arrive at the proof of Theorem~\ref{thm:fournat-hardness}.  The proof is entirely standard, and proceeds by reduction from $d$-to-1 Label Cover.  It makes use of our analysis of the $\fournat$ Test, which is presented in Appendix~\ref{sec:analysis}.  One preparatory note: most of the proof concerns functions $f:\Z_3^K\rightarrow \Z_3$ and $g:\Z_3^{dK}\rightarrow \Z_3$.  However, we also be making use of Fourier analytic notions defined in Section~\ref{sec:fourier}, and this requires dealing with functions whose range is $\roots_3$ rather than $\Z_3$.  Thus, we associate $f$ and $g$ with the functions $\omega^f$ and $\omega^g$, and whenever Fourier analysis is used it will actually be with respect to the latter two functions.

\begin{proof}
Let $G = (U \cup V, E)$ be a $d$-to-1 Label Cover instance with alphabet size $K$ and $d$-to-1 maps $\pi_e:[dK]\rightarrow [K]$ for each edge $e \in E$.  We construct a $\fournat$ instance by replacing each vertex in $G$ with its Long Code and placing constraints on adjacent Long Codes corresponding to the tests made in the $\fournat$ Test.  Thus, each $u\in U$ is replaced by a copy of the hypercube $\Z_3^K$ and labeled by the function $f_u:\Z_3^K \rightarrow \Z_3$.  Similarly, each $v \in V$ is replaced by a copy of the Boolean hypercube $\Z_3^{dK}$ and labeled by the function $g_v:\Z_3^{dK} \rightarrow \Z_3$.  Finally, for each edge $\{u, v\} \in E$, a set of $\fournat$ constraints is placed between $f_u$ and $g_v$ corresponding to the constraints made in the $\fournat$ Test, and given a weight equal to the probability the constraint is tested in the $\fournat$ Test multiplied by the weight of $\{u,v\}$ in $G$.  This produces a $\fournat$ instance whose weights sum to 1 which is equivalent to the following test:
\begin{itemize}
\item Pick an edge $e=(u, v) \in E$ uniformly at random.
\item Reorder the indices of $g_v$ so that the $k$th group of $d$ indices corresponds to $\pi_e^{-1}(k)$.
\item Run the $\fournat$ test on $f_u$ and $g_v$.  Accept iff it does.
\end{itemize}

\paragraph{Completeness}
If the original Label Cover instance is fully satisfiable, then there is a function $F:U\cup V \rightarrow [dK]$ for which $\mathsf{val}(F)=1$.  Set each $f_u$ to the dictator assignment $f_u(x) = x_{F(u)}$ and each $g_v$ to the dictator assignment $g_v(y) = y_{F(v)}$.  Let $e=\{u, v\} \in E$.  Because $F$ satisfies the constraint $\pi_e$, $F(u) = \pi_e(F(v))$.  Thus, $f_u$ and $g_v$ correspond to ``matching dictator'' assignments, and above we saw that matching dictators pass the $\fournat$ Test with probability 1.  As this applies to every edge in $E$, the $\fournat$ instance is fully satisfiable.

\paragraph{Soundness}
Assume that there are functions $\{f_u\}_{u \in U}$ and $\{g_v\}_{v \in V}$ which satisfy at least a $\frac23+\eps$ fraction of the $\fournat$ constraints.  Then there is at least an $\eps/2$ fraction of the edges $e =\{u, v\} \in E$ for which $f_u$ and $g_v$ pass the $\fournat$ Test with probability at least $\frac23+\eps/2$.  This is because otherwise the fraction of $\fournat$ constraint satisfied would be at most
\begin{equation*}
\left(1-\frac{\eps}{2}\right)\left(\frac{2}{3}+\frac{\eps}{2}\right) + \frac{\eps}{2}(1)
= \frac{2}{3} + \frac{2\eps}{3} - \frac{\eps^2}{4} < \frac{2}{3} + \eps.
\end{equation*}
Let $E'$ be the set of such edges, and consider $\{u, v\} \in E'$.  Set $L = dK$.
By Lemma~\ref{lem:big-fourier},
\begin{equation*}
\frac{2}{3} + \frac{\eps}{2} \leq \Pr[\text{$f_u$ and $g_v$ pass the $\fournat$ test}]
\leq \frac{2}{3} + \frac{2}{3}\left(\sum_{\alpha \in \Z_3^L}\left\vert\hat{f}_u(\pih(\alpha))\right\vert\left\vert\hat{g}_v(\alpha)\right\vert^2\left(\frac{1}{2}\right)^{\#\alpha}\right),
\end{equation*}
meaning that
\begin{equation}
\frac{3\eps}{4}
\leq \sum_{\alpha \in \Z_3^L}\left\vert\hat{f}_u(\pih(\alpha))\right\vert\left\vert\hat{g}_v(\alpha)\right\vert^2\left(\frac{1}{2}\right)^{\#\alpha}.\label{eq:fourier-preprob}
\end{equation}
Parseval's equation tells us that $\sum_{\alpha \in \Z_3^L}\vert \hat{g}_v(\alpha) \vert^2 = 1$.  The function $\hat{g}_v$ therefore induces a probability distribution on the elements of $\Z_3^L$.  As a result, we can rewrite Equation~\eqref{eq:fourier-preprob} as
\begin{equation}
\frac{3\eps}{4}\leq \E_{\alpha \sim \hat{g}_v}\left[\left\vert\hat{f}_u(\pih(\alpha))\right\vert\left(\frac{1}{2}\right)^{\#\alpha}\right].\label{eq:fourier-prob}
\end{equation}
As previously noted, $\vert \hat{f}_u(\pih(\alpha))\vert$ is less than 1 for all $\alpha$, so the expression in this expectation as never greater than 1.  We can thus conclude that
\begin{equation*}
\frac{3\eps}{8}\leq \Pr_{\alpha \sim \hat{g}_v}\underbrace{\left[\left\vert \hat{f}_u(\pih(\alpha))\right\vert \left(\frac{1}{2}\right)^{\#\alpha} \geq \frac{3\eps}{8}\right]}_{\mathsf{GOOD}_\alpha},
\end{equation*}
as otherwise the expectation in Equation~\eqref{eq:fourier-prob} would be less than $3\eps/4$.  Call the event in the probability $\mathsf{GOOD}_\alpha$.  When $\mathsf{GOOD}_\alpha$ occurs, the following happens:
\begin{itemize}
\item $\vert \hat{f}_u(\pih(\alpha))\vert^2 \geq 9\eps^2/64$.
\item $\#\alpha \leq \log_2(8/3\eps)$.  Furthermore, as $f_u$ is folded, $\#\alpha > 0$.
\end{itemize}
This suggests the following randomized decoding procedure for each $u \in U$: pick an element $\beta \in \Z_3^K$ with probability $\vert \hat{f}_u(\beta)\vert^2$ and choose one of its nonzero coordinates uniformly at random.  Similarly, for each $v \in V$, pick an element $\alpha \in \Z_3^L$ with probability $\vert \hat{g}_v(\alpha)\vert^2$ and choose one of its nonzero coordinates uniformly at random.  In both cases, nonzero coordinates are guaranteed to exist because all the $f_u$'s and $g_v$'s are folded.

Now we analyze how well this decoding scheme performs for the edges $e = \{u, v\} \in E'$ (we may assume the other edges are unsatisfied).  Suppose that when the elements of $\Z_3^K$ and $\Z_3^L$ were randomly chosen, $g_v$'s set $\alpha$ was in $\mathsf{Good}_\alpha$, and $f_u$'s set $\beta$ equals $\pih(\alpha)$.  Then, as $\#\alpha \leq \log_2(8/3\eps)$, and each label in $\pih(\alpha)$ has at least one label in $\alpha$ which maps to it, the probability that matching labels are drawn is at least $1/\log_2(8/3\eps)$.  Next, the probability that such an $\alpha$ and $\beta$ are drawn is
\begin{equation*}
\sum_{\alpha \in \mathsf{GOOD}}\vert \hat{f}_u(\pih(\alpha)) \vert^2\vert \hat{g}_v(\alpha)\vert^2
\geq \frac{9\eps^2}{64}\sum_{\alpha \in \mathsf{GOOD}}\vert \hat{g}_v(\alpha)\vert^2
\geq \frac{9\eps^2}{64}\frac{3\eps}{8} = \frac{27\eps^3}{512}.
\end{equation*}
Combining these, the probability that this edge is satisfied is at least $27\eps^3/512\log_2(8/3\eps)$.  Thus, the decoding scheme satisfies at least
\begin{equation*}
\frac{27\eps^3}{512\log_2(8/3\eps)}\cdot \frac{\vert E'\vert}{\vert E\vert} \geq \frac{27\eps^4}{1024\log_2(8/3\eps)}
\end{equation*}
fraction of the Label Cover edges in expectation.  By the probabilistic method, an assignment to the Label Cover instance must therefore exist which satisfies at least this fraction of the edges.

We now apply Theorem~\ref{thm:mosraz}, setting the soundness value in that theorem equal to $O(\eps^5)$, which concludes the proof.  
\end{proof}

\bibliographystyle{alpha}
\bibliography{../bib/odonnell-bib}

\end{document}